\def\KL{\mathrm{KL}}
\def\Real{\Rbb}
\def\Rbb{\mathbb{R}}
\def\<{\langle}
\def\>{\rangle}
\def\st{\text{\rm s.\,t.\ }}
\def\subto{\text{\rm subject to }}
\newcommand{\tr}[1]{{\mathrm{tr}(#1)}}
\newtheorem{theorem}{Theorem}
\newtheorem{lemma}[theorem]{Lemma}
\newtheorem{definition}{Definition}
\newtheorem{example}{Example}
\newtheorem{remark}{Remark}
\def\0{\mbox{\bf 0}}
\def\1{\mbox{\bf 1}}
\def\2{\mbox{\bf 2}}
\def\3{\mbox{\bf 3}}
\def\4{\mbox{\bf 4}}
\def\5{\mbox{\bf 5}}
\def\6{\mbox{\bf 6}}
\def\7{\mbox{\bf 7}}
\def\8{\mbox{\bf 8}}
\def\9{\mbox{\bf 9}}
\definecolor{cyan}{cmyk}{1,0,0,0}
\definecolor{lightcyan}{cmyk}{0.5,0,0,0}
\definecolor{pastelcyan}{cmyk}{0.25,0,0,0}
\definecolor{magenta}{cmyk}{0,1,0,0}
\definecolor{yellow}{cmyk}{0,0,1,0}
\definecolor{lightyellow}{cmyk}{0,0,0.5,0}
\definecolor{pastelyellow}{cmyk}{0,0,0.25,0}
\definecolor{black}{cmyk}{0,0,0,1}
\definecolor{darkgray}{cmyk}{0,0,0,0.75}
\definecolor{gray}{cmyk}{0,0,0,0.5}
\definecolor{lightgray}{cmyk}{0,0,0,0.25}
\definecolor{white}{cmyk}{0,0,0,0}
\definecolor{red}{cmyk}{0,1,1,0}
\definecolor{orange}{cmyk}{0,0.5,1,0}
\definecolor{scarlet}{cmyk}{0,1,0.5,0}
\definecolor{brown}{cmyk}{0.5,0.75,1,0}
\definecolor{camel}{cmyk}{0.25,0.375,0.5,0}
\definecolor{cream}{cmyk}{0,0.2,0.3,0}
\definecolor{green}{cmyk}{1,0,1,0}
\definecolor{lightgreen}{cmyk}{0.5,0,0.5,0}
\definecolor{pastelgreen}{cmyk}{0.25,0,0.25,0}
\definecolor{mossgreen}{cmyk}{0.64,0.4,1,0}
\definecolor{yellowgreen}{cmyk}{0.5,0,1,0}
\definecolor{skyblue}{cmyk}{0.4,0.16,0,0}
\definecolor{royal}{cmyk}{1.0,0.5,0,0}
\definecolor{navyblue}{cmyk}{0.9,0.75,0.5,0}
\definecolor{lightnavy}{cmyk}{0.4,0.3,0.2,0}
\definecolor{blue}{cmyk}{1,1,0,0}
\definecolor{lightblue}{cmyk}{0.5,0.5,0,0}
\definecolor{lavender}{cmyk}{0.25,0.25,0,0}
\definecolor{violet}{cmyk}{0.75,1,0.25,0}
\definecolor{purple}{cmyk}{0.5,1,0.5,0}
\definecolor{pink}{cmyk}{0,0.5,0,0}
\definecolor{pastelpink}{cmyk}{0,0.25,0,0}
\def\PD{\mathrm{PD}}
\title{A Bregman Extension of quasi-Newton updates I:\\ 
An Information Geometrical framework}
\author{
  Takafumi Kanamori\\ Nagoya University \\ \tt{kanamori@is.nagoya-u.ac.jp}
  \and
  Atsumi Ohara\\ Osaka University\\ \tt{ohara@sys.es.osaka-u.ac.jp}
 }
\date{}
\begin{document}
\maketitle
 
\begin{abstract}
 We study quasi-Newton methods from the viewpoint of information geometry induced
 associated with Bregman divergences. 
 Fletcher has studied a variational problem which derives the approximate Hessian update 
 formula of the quasi-Newton methods. 
 We point out that the variational problem is identical to optimization of the
 Kullback-Leibler divergence, which is a discrepancy measure 
 between two probability distributions. 
 The Kullback-Leibler divergence for the multinomial normal distribution corresponds to
 the objective function Fletcher has considered. 
 We introduce the Bregman divergence as an extension of the Kullback-Leibler divergence,
 and derive extended quasi-Newton update formulae based on the variational problem with
 the Bregman divergence. 
 As well as the Kullback-Leibler divergence, the Bregman divergence introduces 
 the information geometrical structure on the set of positive definite matrices. 
 From the geometrical viewpoint, we study 
 the approximation Hessian update, the invariance property of the update formulae, 
 and the sparse quasi-Newton methods. 
 Especially, we point out that the sparse quasi-Newton method is closely related 
 to statistical methods such as the EM-algorithm  and the boosting algorithm. 
 Information geometry is useful tool not only to better understand the quasi-Newton
 methods but also to design new update formulae. 
\end{abstract}

\section{Introduction}
\label{sec:Introduction}
The main purpose of this article is to study the quasi-Newton methods from the view point
of dualistic geometry or in other word {\em information geometry} 
\cite{AmariNagaoka00,ohara05:_geomet_posit_defin_matric_and,
murata04:_infor_geomet_u_boost_bregm_diver}. 
Let us consider the unconstrained optimization problem 
\begin{align}
 \label{eqn:main_opt_problem}
 \text{minimize}\  f(x),\quad x\in\Real^n, 
\end{align}
in which the function $f:\Real^n\rightarrow\Real$ is twice continuously
differentiable on $\Real^n$. 
The quasi-Newton method is known to be one of the most successful methods for
unconstrained function optimization. 
In quasi-Newton method a sequence $\{x_k\}_{k=0}^{\infty}\subset\Real^n$ is successively
generated in a manner such that $x_{k+1}=x_k-\alpha_k B_k^{-1}\nabla f(x_k)$, where
$\alpha_k$ is a step length computed by a line search technique. The matrix $B_k$ is a
positive definite matrix which is expected to approximate the Hessian matrix 
$\nabla^2 f(x_k)$. The matrix $B_k$ and the step length $\alpha_k$ are designed such that
the sequence $x_k$ converges to a local minima of the problem
\eqref{eqn:main_opt_problem}. 
For the step length, the Wolfe condition \cite[Section 3.1]{nocedal99:_numer_optim} is a
standard criterion to determine the value of $\alpha_k$. 
In terms of the approximate Hessian matrix, mainly there are two methods of updating $B_k$
to $B_{k+1}$; one is called the DFP formula and the other is called the BFGS formula. 

We introduce the DFP and the BFGS methods. 
Let $s_k$ and $y_k$ be column vectors defined by
\begin{align*}
 s_k=x_{k+1}-x_k=-\alpha_kB_k^{-1}\nabla f(x_k),
 \qquad y_k=\nabla f(x_{k+1})-\nabla f(x_k), 
\end{align*}
and suppose that $s_k^\top y_k>0$ holds. 
In the DFP formula the approximate Hessian matrix $B_k$ is updated such that 
\begin{align}
 \label{eqn:DFP-update-formula}
 B_{k+1}
 &=
 B^{DFP}[B_k;s_k,y_k] 
 := B_k-\frac{B_ks_ky_k^\top+y_ks_k^\top B_k}{s_k^\top y_k}
 +s_k^\top B_ks_k \frac{y_k y_k^\top}{(s_k^\top y_k)^2}+\frac{y_ky_k^\top}{s_k^\top y_k}. 
\end{align}
In the BFGS update formula, the matrix $B_{k+1}$ is defined by 
\begin{align}
 \label{eqn:BFGS-update-formula}
 B_{k+1}
 &= B^{BFGS}[B_k;s_k,y_k]
 :=B_k-\frac{B_ks_ks_k^\top B_k}{s_k^\top B_ks_k}+\frac{y_ky_k^\top}{s_k^\top y_k}, 
\end{align}
Under the condition that $B_k\in\mathrm{PD}(n)$ and $s_k^\top y_k>0$, 
the matrices $B^{DFP}[B_k;s_k,y_k]$ and $B^{BFGS}[B_k;s_k,y_k]$ are also positive definite
matrices. 
If there is no confusion, the update formulae $B^{DFP}[B;s,y]$ and $B^{BFGS}[B;s,y]$ are
written as $B^{DFP}[B]$ and $B^{BFGS}[B]$, respectively. 
In practice, the Cholesky decomposition of $B_k$ is successively updated in order to
compute the search direction $-B_k^{-1}\nabla f(x_k)$ efficiently
\cite{gill72:_quasi_newton_method_for_uncon_optim}. 
Note that the equality 
\begin{align*}
 B^{DFP}[B;s,y]^{-1}=B^{BFGS}[B^{-1};y,s]
\end{align*}
holds. Hence, we can derive the update formulae for the inverse $H_k=B_k^{-1}$ without
inversion of matrix. 

Both the DFP and the BFGS methods are derived from variational problems over the set of
positive definite matrices \cite{fletcher91:_new_resul_for_quasi_newton_formul}.  
Let $\PD(n)$ be the set of all $n$ by $n$ symmetric positive definite matrices, and 
the function $\psi:\mathrm{PD}(n)\rightarrow\Real$ be a strictly convex function 
over $\PD(n)$ defined by 
\begin{align*}
 \psi(A)=\tr{A}-\log\det{A}. 
\end{align*}
Fletcher \cite{fletcher91:_new_resul_for_quasi_newton_formul} has shown that the DFP
update formula \eqref{eqn:DFP-update-formula} is obtained as the unique solution 
of the constraint optimization problem, 
\begin{align*}
 \min_{B\in\mathrm{PD}(n)}\ \psi(B_k^{1/2}B^{-1}B_k^{1/2})\quad \subto\ Bs_k=y_k, 
\end{align*}
where $A^{1/2}$ for $A\in\PD(n)$ is the matrix satisfying $A^{1/2}\in\PD(n)$
and $(A^{1/2})^2=A$. 
The BFGS formula is also obtained as the optimal solution of 
\begin{align*}
 \min_{B\in\mathrm{PD}(n)}\ \psi(B_k^{-1/2}BB_k^{-1/2})\quad \subto\ Bs_k=y_k, 
\end{align*}
in which $B_k^{-1/2}$ denotes $(B_k^{-1})^{1/2}$ or equivalently $(B_k^{1/2})^{-1}$. 

It will be worthwhile to point out that the function $\psi$ is identical to
Kullback-Leibler(KL) divergence \cite{AmariNagaoka00,kullback51:_infor_and_suffic} 
up to an additive constant. 
For $P, Q\in\PD(n)$, the KL-divergence is defined by
\begin{align*}
 \mathrm{KL}(P,Q)
 &=\tr{PQ^{-1}}-\log\det(PQ^{-1})-n
\end{align*}
which is equal to $\psi(Q^{-1/2}PQ^{-1/2})-n$. 
The KL-divergence is regarded as a generalization of squared distance. 
Using the KL-divergence, we can represent the update formulae as the optimal solutions of
the following minimization problems, 
\begin{align}
 \text{(DFP)}\qquad&\min_{B\in\mathrm{PD}(n)}\  \mathrm{KL}(B_k,B)\quad \subto\  Bs_k=y_k,
 \label{eqn:DFP}\\
 \text{(BFGS)}\qquad&\min_{B\in\mathrm{PD}(n)}\ \mathrm{KL}(B,B_k)\quad \subto\  Bs_k=y_k. 
 \label{eqn:BFGS}
\end{align} 
The KL-divergence is asymmetric, that is, $\KL(P,Q)\neq \KL(Q,P)$ in general. 
Hence the above problems will provide different solutions. 

In the information geometry \cite{AmariNagaoka00}, 
the KL-divergence defines a geometrical structure over the space
of probability densities. Statistical inference such that the maximum likelihood estimator
is better understood based on the geometrical intuition. 
Originally, the KL-divergence is defined as the discrepancy measure between two multinomial
normal distributions with mean zero. 
In this paper, we show that the information geometrical approach is useful to understand
the behaviour of quasi-Newton methods. 
On the set of positive definite matrices, $\PD(n)$, we define the so-called Bregman
divergence which is an extension of the KL-divergence. 
The Bregman divergence induces a dualistic geometrical structure on $\PD(n)$. Then we can
derive new Hessian update formulae based on the Bregman divergence. 
We present a geometrical view of quasi-Newton updates, and discuss the relation between
the Hessian update formula and the statistical inference based on the information
geometry. 


Here is the brief outline of the article. 
In Section \ref{sec:Elements_Information_Geometry},
we introduce the elements of information geometry based on the Bregman divergence, 
especially over the set of positive definite matrices. 
In Section \ref{sec:potential-function_quasi-Newton}, 
an extended quasi-Newton formula is derived from the Bregman divergence. 
Section \ref{sec:Invariance} is devoted to discuss the invariance property of the
quasi-Newton update formula under the group action. 
In Section \ref{sec:Sparse-V-quasi-Newton}, we discuss 
the sparse quasi-Newton methods \cite{yamashita08:_spars_quasi_newton_updat_with} from the
viewpoint of the information geometry, and point out that the sparse quasi-Newton method
is closely related to statistical methods such as the EM-algorithm 
\cite{mclachlan08:_em_algor_and_exteny} or the boosting algorithm
\cite{FreundSchapire97,murata04:_infor_geomet_u_boost_bregm_diver}. 
We conclude with a discussion and outlook in Section \ref{sec:Concluding_Remarks}. 
Some proofs of the theorems are postponed to Appendix. 


Throughout the paper, we use the following notations: 
The set of positive real numbers are denoted as $\Real_+\subset\Real$. 
Let $\det{A}$ be the determinant of square matrix $A$, and 
$\mathrm{GL}(n)$ denotes the set of $n$ by $n$ non-degenerate real matrices. 
$\mathrm{SL}(n)\subset \mathrm{GL}(n)$ is the set of $n$ by $n$ non-degenerate real
matrices with determinant $1$, that is, 
$\mathrm{SL}(n)=\{A\in\mathrm{GL}(n)~|~\det{A}=1\}$. 
The set of all $n$ by $n$ real symmetric matrices is denoted as $\mathrm{Sym}(n)$, and 
let $\mathrm{PD}(n)\subset\mathrm{GL}(n)\cap\mathrm{Sym}(n)$ be the set of $n$ by $n$
symmetric positive definite matrices. 
For $P\in\mathrm{PD}(n)$, the square root of $P$ is denoted as $P^{1/2}$ which is defined
as $P$
For a vector $x$, $\|x\|$ denotes the Euclidean norm. 
For two square matrices $A,\,B$, the inner product 
$\<A,B\>$ is defined by $\tr{A B^\top}$, and $\|A\|_F$ is the Frobenius norm defined by
the square root of $\<A,A\>$. 
Throughout the paper we only deal with the inner product of symmetric matrices, and the
transposition in the trace can be dropped.

\section{Bregman Divergences and Dualistic Geometry of Positive Definite Matrices} 
\label{sec:Elements_Information_Geometry}
We introduce Bregman divergences which are regarded as an extension of the
KL-divergence. Then we illustrate a differential geometrical structure defined from the
Bregman divergence over the set of positive definite matrices. 
In sequel sections, we will provide a geometrical interpretation of quasi-Newton methods. 
For general Bregman divergences, however, the quasi-Newton update formula cannot be
obtained in the explicit form. In order to obtain computationally tractable update
formulae, we often use a specific Bregman divergence which is called the $V$-Bregman 
divergence in this article. First, we define general Bregman divergences, and then we
introduce the $V$-Bregman divergence as a special case of general Bregman divergences. 
We will show the associated geometrical structure on the set of positive definite
matrices. 

\subsection{Bregman divergences}
\label{subsec:BregmanDiv}
The Bregman divergence \cite{bregman67:_relax_method_of_findin_common} is defined through
the so-called potential function. Below, we define the Bregman divergence over the set of
positive definite matrices. 
\begin{definition}[Potential function and Bregman divergence]
Let $\varphi:\PD(n)\rightarrow\Real$ be a continuously differentiable, strictly convex
 function that maps positive definite matrices to real numbers.
The function $\varphi$ is referred to as potential function or potential for short. 
Given a potential $\varphi$, the Bregman divergence $D_\varphi(P,Q)$ is defined as 
\begin{align}
 D_\varphi(P,Q)=\varphi(P)-\varphi(Q)-\<\nabla\varphi(Q),P-Q\>
 \label{eqn:def_bregman_div}
\end{align}
 for $P,Q\in\PD(n)$, where $\nabla\varphi(Q)$ is the $n$ by $n$ matrix whose $(i,j)$
 element is given as $\frac{\partial\varphi}{\partial Q_{ij}}(Q)$. 
\end{definition}
The Bregman divergence $D_{\varphi}(P,Q)$ is non-negative and equals zero if and only if
$P=Q$ holds. Indeed, due to the strict convexity of $\varphi$, the function $\varphi(P)$
lies above its tangents $\varphi(Q)+\<\nabla\varphi(Q),P-Q\>$ at $Q$. Hence, the
non-negativity of the Bregman divergence $D_{\varphi}(P,Q)$ is guaranteed. 
Note that $D_{\varphi}(P,Q)$ is convex in $P$ but not necessarily
convex in $Q$. 
Bregman divergences have been well studied in the fields of statistics and machine learning 
\cite{banerjee05:_clust_with_bregm_diver,
dhillon07:_matrix_nearn_probl_with_bregm_diver,
murata04:_infor_geomet_u_boost_bregm_diver}. 

\begin{example}
 \label{example:KL-div}
 For $P\in\PD(n)$ let the function $\varphi$ be $\varphi(P)=-\log\det(P)$. 
 Note that $\varphi(P)$ is a strictly convex function. Then, we have 
 \begin{align*}
  (\nabla\varphi(Q))_{ij}=-\frac{\partial}{\partial Q_{ij}}\log\det Q=-(Q^{-1})_{ji}. 
 \end{align*}
 Hence the corresponding Bregman divergence is 
 \begin{align*}
  D_{\varphi}(P,Q)
  =-\log\det{P}+\log\det{Q}+\<Q^{-1},P-Q\>
  =\<P,Q^{-1}\>-\log\det(PQ^{-1})-n, 
 \end{align*}
 is identical to the KL-divergence on the multivariate normal distribution with mean
 zero \cite{AmariNagaoka00,a.96:_dualis_differ_geomet_of_posit}. 
\end{example}
By replacing the KL-divergence in \eqref{eqn:DFP} or \eqref{eqn:BFGS} with a Bregman
divergence, we will obtain another variational problem for the quasi-Newton method. 
In general, however, update formula cannot be explicitly obtained. 
Below we define a class of Bregman divergences called $V$-Bregman divergence. 
In Section \ref{sec:potential-function_quasi-Newton}, we show that the $V$-Bregman
divergence provides an explicit update formula of the quasi-Newton method. 

We prepare some ingredients to define the $V$-Bregman divergence. 
Let $V:\Real_{+}\rightarrow\Real$ be a strictly convex, decreasing, and third order 
continuously differentiable function. For the derivative $V'$, the inequality $V'<0$ holds
from the condition. 
Indeed, the condition leads to $V'\leq 0$ and $V''\geq0$, and if $V'(z_0)=0$ holds for 
some $z_0\in\Real_+$, then $V'(z)=0$ holds for 
all $z\geq z_0$. 
Hence $V(z)$ is affine function for $z\geq z_0$. This contradicts the strict convexity of
$V$. We define the functions $\nu_V:\Real_+\rightarrow\Real$ and
 $\beta_V:\Real_+\rightarrow\Real$ such that  
 \begin{align*}
  \nu_V(z)=-z V'(z),\qquad 
  \beta_V(z)=\frac{z\nu_V'(z)}{\nu_V(z)}=z\cdot\frac{d}{dz}\log \nu_V(z)
 \end{align*}
 Since $\nu_V(z)>0$ holds for $z>0$, the function $\beta_V$ is well defined on $\Real_+$. 
 The subscript $V$ of $\nu_V$ and $\beta_V$ will be dropped if there is no confusion. 
 We now are ready to present the definition of $V$-Bregman divergence over
 $\mathrm{PD}(n)$. 
\begin{definition}[$V$-Bregman divergence]
 \label{def:V-Bregman-div_potential}
 Let $V:\Real_{+}\rightarrow\Real$ be a function which is strictly convex, decreasing, and
 third order continuously differentiable. Suppose that the functions $\nu$ and $\beta$
 defined from $V$ satisfy the following conditions: 
 \begin{align}
  \label{eqn:beta-condition}
  &\beta(z)<\frac{1}{n}\qquad (z>0)
 \end{align}
 and
 \begin{align}
  \label{eqn:nu-limit-condition}
  &\lim_{z\rightarrow+0}\frac{z}{\nu(z)^{n-1}}=0. 
 \end{align}
 The Bregman divergence defined from the potential $\varphi(P)=V(\det{P})$
 is called $V$-Bregman divergence, and denoted as $D_V(P,Q)$. 
 Not only $V(\det{P})$ but also $V(z)$ is also referred to as potential. 
\end{definition}
As shown in \cite{ohara05:_geomet_posit_defin_matric_and}, the function $V(\det{P})$ is 
strictly convex in $P\in\mathrm{PD}(n)$ if and only if the potential $V$ satisfies 
\eqref{eqn:beta-condition}. 
The $V$-Bregman divergence has the form of 
\begin{align}
 D_V(P,Q)=V(\det{P})-V(\det{Q})+\nu(\det{Q})\<Q^{-1},P\>-n\nu(\det{Q}). 
 \label{eqn:V-Bregman}
\end{align}
Indeed, substituting 
 \begin{align*}
  (\nabla\varphi(Q))_{ij} 
  = \frac{\partial V(\det{Q})}{\partial Q_{ij}}
  = V'(\det{Q})\frac{\partial\det{Q}}{\partial Q_{ij}}
  =-\nu(\det{Q})(Q^{-1})_{ij}, 
 \end{align*}
into \eqref{eqn:def_bregman_div}, we obtain the expression of $D_V(P,Q)$. 
The KL-divergence $\KL(P,Q)$ is represented as $D_V(P,Q)$ with the potential 
$V(z)=-\log z$. Below we show some examples of $V$-Bregman divergence. 
\begin{example}
 \label{example:power-div}
 For the power potential $V(z)=(1-z^\gamma)/\gamma$ with $\gamma<1/n$, 
 we have $\nu(z)=z^\gamma$ and $\beta(z)=\gamma$. Then, we obtain
 \begin{align*}
  D_V(P,Q)=
  (\det{Q})^\gamma
  \bigg\{\<P,Q^{-1}\>+\frac{1-(\det{PQ^{-1}})^\gamma}{\gamma}-n\bigg\}. 
 \end{align*}
 The KL-divergence is recovered by taking the limit of $\gamma\rightarrow0$. 
\end{example}

\begin{example}
 \label{example:bounded-div}
 For $0\leq c<1$, let us define $V(z)=c\log(c z+1)-\log(z)$.
 Then $V(z)$ is a strictly convex and decreasing function, and we obtain 
 \begin{align*}
  \nu(z)=1-c+\frac{c}{cz+1}>0,\qquad 
  \beta(z)=\frac{-c^2z}{(cz+1)(c(1-c)z+1)}\leq 0
 \end{align*}
 for $z>0$. 
 The negative-log potential, $V(z)=-\log z$, is recovered by setting $c=0$.
 The potential satisfies the bounding condition $0<1-c\leq\nu(z)\leq 1$. 
 As shown in the sequel \cite{kanamori10:_bregm_exten_of_quasi_newton_updat_ii},  
 the bounding condition of $\nu$ will be assumed to prove the convergence property of the 
 quasi-Newton method. 
\end{example}

\subsection{Dualistic Geometry defined from Bregman Divergences}
\label{subsec:GeometricalStructure}
The space of positive definite matrices has rich geometrical and algebraic structures
\cite{ohara05:_geomet_posit_defin_matric_and}
Here we introduce dualistic geometrical structure on $\mathrm{PD}(n)$ induced form the
Bregman divergence. 
See \cite{murata04:_infor_geomet_u_boost_bregm_diver,ohara99:_infor_geomet_analy_of_inter}
for details. 

We introduce two coordinate systems on $\mathrm{PD}(n)$. The $\eta$-coordinate system
$\eta:\PD(n)\rightarrow\PD(n)$ is defined as 
\[
\eta(P)=P, 
\]
which is the identity function on $\PD(n)$. 
The definition of the other coordinate system requires the potential $\varphi$ for the
Bregman divergence $D_{\varphi}(P,Q)$ in \eqref{eqn:def_bregman_div}. 
Let us define the $\theta_\varphi$-coordinate system as 
\begin{align*}
 \theta_{\varphi}(P)=\nabla\varphi(P)
\end{align*}
Note that the matrix $\theta_\varphi(P)$ is not necessarily a positive definite matrix. 
Indeed, for the potential $\varphi(P)=-\log\det P$, we have $\theta_{\varphi}(P)=-P^{-1}$
which is a negative definite matrix. The function $\theta_\varphi$ is, however, one-to-one
mapping. Hence $\theta_\varphi(P)$ works as the coordinate system on $\PD(n)$. 
The inverse function of $\nabla\varphi$ is expressed by the conjugate function of
$\varphi$. The convex function $\varphi$ has the dual representation called Fenchel
conjugate, which is defined as 
\begin{align}
 \varphi^*(P)=\sup_{Q\in\PD(n)}\big\{\<P,Q\>-\varphi(Q)\big\}. 
 \label{eqn:Fenchel-dual}
\end{align}
Then, we have 
\begin{align*}
 \nabla\varphi^*(P)=(\nabla\varphi)^{-1}(P)=(\theta_\varphi)^{-1}(P) 
\end{align*}
on the domain of $\varphi^*$ \cite[Theorem 26.5]{rockafellar70:_convex_analy}. 
For any potential $\varphi$, the $\eta$-coordinate system is common and only the
$\theta_{\varphi}$-coordinate system depends on the potential. 

For the potential $V$ of the $V$-Bregman divergence, the $\theta_\varphi$-coordinate
system is denoted as $\theta_V(P)$, which is given as 
\begin{align*}
\theta_V(P)=-\nu(P)P^{-1}. 
\end{align*}
Thus $\theta_V(P)$ is a negative definite matrix for $P\in\PD(n)$. 

Let us define the flatness of a submanifold in $\mathrm{PD}(n)$. See \cite{AmariNagaoka00}
for the formal definition of the flatness with terminologies of differential geometry. 
\begin{definition}
 [autoparallel submanifold]
 Let $\mathcal{M}$ be a subset of $\mathrm{PD}(n)$. If $\mathcal{M}$ is represented as an
 affine subspace in the $\eta$-coordinate, then $\mathcal{M}$ is called
 $\eta$-autoparallel 
 submanifold. If $\mathcal{M}$ is represented as an affine subspace in the
 $\theta_\varphi$-coordinate, 
 then $\mathcal{M}$ is called $\theta_\varphi$-autoparallel submanifold. 
 When an $\eta$-autoparallel submanifold $\mathcal{M}$ is also
 $\theta_\varphi$-autoparallel, 
 $\mathcal{M}$ is called doubly autoparallel submanifold. 
\end{definition}
For the potential $\varphi(P)=V(\det{P})$, 
the $\theta_\varphi$-coordinate and the $\theta_\varphi$-autoparallel is 
denoted as 
the $\theta_V$-coordinate and the $\theta_V$-autoparallel, respectively. 
Formally, the flatness is defined from the connection on the differentiable manifold 
\cite{AmariNagaoka00,kobayashi96:_found_of_differ_geomet}. Here, we adopt a simplified
definition. 
\begin{example}
 \label{example:doubly-autoparallel-secant-cond}
 Let $V(z)$ be the negative logarithmic function $V(z)=-\log(z)$, 
 then we have $\nu(z)=1$. 
 The $\eta$-coordinate system is defined as $\eta(P)=P$, and the $\theta_V$-coordinate
 system is given as $\theta_V(P)=-P^{-1}$. 
 For two vectors $s,\,y\in\Real^n$ 
 we define the submanifold $\mathcal{M}$ which represents the secant condition such that 
 \begin{align*}
  \mathcal{M}
  =\{B\in\mathrm{PD}(n)~|~Bs=y\}. 
 \end{align*}
 Suppose $\mathcal{M}\neq \emptyset$, then we see that $\mathcal{M}$ is doubly
 autoparallel, since
 \begin{align*}
  \mathcal{M}
  =\{B\in\mathrm{PD}(n)\,|\,\eta(B)s=y\}
  =\{B\in\mathrm{PD}(n)\,|\,\theta_V(B)y=-s\}
 \end{align*}
 holds. That is, $\mathcal{M}$ is represented as the affine subspace in both the
 $\eta$-coordinate system and the $\theta_V$-coordinate system. 
\end{example}

\subsection{Extended Pythagorean Theorem}
\label{subsec:Extended_Pythagorean_Theorem}
The projection of a matrix in $\PD(n)$ onto an autoparallel submanifold is defined below. 
Then, we introduce the extended Pythagorean theorem. 
\begin{definition}[projection]
 Let $\varphi$ be a potential, $Q$ be a positive definite matrix.  
 An $\eta$-autoparallel submanifold in $\mathrm{PD}(n)$ is denoted as $\mathcal{M}$. 
 The matrix $P^*\in\mathcal{M}$ is called $\theta_\varphi$-projection of $Q$ onto $\mathcal{M}$, 
 when the equality 
 \begin{align*}
  \<\theta_\varphi(Q)-\theta_\varphi(P^*),\,\eta(P)-\eta(P^*)\>=0,\quad ^\forall P\in\mathcal{M}
 \end{align*}
 holds. 
 Let $\mathcal{N}$ be a $\theta_\varphi$-autoparallel submanifold in $\mathrm{PD}(n)$. 
 The matrix $P^*\in\mathcal{N}$ is called $\eta$-projection of $Q$ onto $\mathcal{N}$ when 
 the equality 
 \begin{align*}
  \<\eta(Q)-\eta(P^*),\,\theta_{\varphi}(P)-\theta_{\varphi}(P^*)\>=0,\quad ^\forall P\in\mathcal{N}
 \end{align*}
 holds. 
\end{definition}
Let $\mathcal{L}$ be a one-dimensional $\theta_\varphi$-autoparallel submanifold defined as
\begin{align*}
 \mathcal{L}=
 \big\{
 P\in\PD(n)\,|\,^\exists t\in\Real,\ 
 \theta_\varphi(P)=(1-t)\theta_{\varphi}(Q)+t\theta_{\varphi}(P^*)
 \big\}. 
\end{align*}
When $P^*$ is the $\theta_\varphi$-projection of $Q$ onto $\mathcal{M}$, 
the $\eta$-autoparallel submanifold $\mathcal{M}$ is orthogonal to 
$\mathcal{L}$ at $P^*$ with respect to the inner product $\<\cdot\,,\cdot\>$. 
In the $\eta$-projection, also the same picture holds by replacing $\eta$ and
$\theta_\varphi$. 

\begin{theorem}[Extended Pythagorean Theorem
 \cite{AmariNagaoka00,murata04:_infor_geomet_u_boost_bregm_diver}]
 \label{theorem:Extended_Pythagorean_Theorem}
 Let $\varphi$ be a potential function, $\mathcal{M}$ be an $\eta$-autoparallel submanifold in
 $\mathrm{PD}(n)$, and $Q$ be a positive definite matrix. 
 Then, the following three statements are equivalent.
\begin{description}
 \item[(a)] $P^*$ is a $\theta_\varphi$-projection of $Q$ onto $\mathcal{M}$. 
 \item[(b)] $P^*\in\mathcal{M}$ satisfies the equality
	    \begin{align}
	     \label{eqn:PythagoreanTheorem}
	     D_{\varphi}(P,Q) =  D_{\varphi}(P,P^*)+D_{\varphi}(P^*,Q)
	    \end{align}
	    for any $P\in\mathcal{M}$. 
 \item[(c)] $P^*$ is the unique optimal solution of the problem
	    \begin{align}
	     \label{eqn:opt-prob_V-projection_ontoM}
	     &\min_{P\in\mathrm{PD}(n)}\,D_{\varphi}(P,Q)\quad \subto P\in\mathcal{M}. 
	    \end{align}
\end{description}
\end{theorem}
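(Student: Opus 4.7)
The plan is to establish the equivalences through the chain (a)$\Leftrightarrow$(b)$\Rightarrow$(c)$\Rightarrow$(a), with the three-point identity for Bregman divergences as the central computation.

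First, I would derive the so-called three-point (or generalized cosine) identity
\begin{align*}
 D_{\varphi}(P,Q) - D_{\varphi}(P,P^*) - D_{\varphi}(P^*,Q)
 = \langle \nabla\varphi(P^*)-\nabla\varphi(Q),\, P-P^*\rangle
 = \langle \theta_\varphi(P^*)-\theta_\varphi(Q),\, \eta(P)-\eta(P^*)\rangle
\end{align*}
by expanding each term using the definition \eqref{eqn:def_bregman_div} and cancelling $\varphi(P), \varphi(P^*), \varphi(Q)$. This is a short but crucial calculation; every subsequent step rests on it. From this identity, (a)$\Leftrightarrow$(b) is immediate: the right-hand side vanishes for all $P\in\mathcal{M}$ iff $P^*$ is a $\theta_\varphi$-projection of $Q$ onto $\mathcal{M}$, and it vanishes iff the Pythagorean equality \eqref{eqn:PythagoreanTheorem} holds.

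Next, I would show (b)$\Rightarrow$(c). Assuming \eqref{eqn:PythagoreanTheorem}, for any $P\in\mathcal{M}$ we have
\begin{align*}
 D_{\varphi}(P,Q) = D_{\varphi}(P,P^*) + D_{\varphi}(P^*,Q) \geq D_{\varphi}(P^*,Q),
\end{align*}
using non-negativity of the Bregman divergence. Strict convexity of $\varphi$ ensures $D_{\varphi}(P,P^*)=0$ iff $P=P^*$, so $P^*$ is the unique minimizer, giving (c).

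Finally, for (c)$\Rightarrow$(a), I would use the first-order optimality condition. Since $\mathcal{M}$ is $\eta$-autoparallel, it is an affine subspace in the $\eta$-coordinate, so its tangent space at $P^*$ consists of the displacements $\eta(P)-\eta(P^*)=P-P^*$ for $P\in\mathcal{M}$. The gradient of $D_\varphi(\cdot,Q)$ in the $\eta$-coordinate is $\nabla\varphi(P)-\nabla\varphi(Q)=\theta_\varphi(P)-\theta_\varphi(Q)$, and convexity of $D_\varphi(\cdot,Q)$ (noted earlier in Section~\ref{subsec:BregmanDiv}) reduces optimality to the stationarity condition
\begin{align*}
 \langle \theta_\varphi(P^*)-\theta_\varphi(Q),\, \eta(P)-\eta(P^*)\rangle = 0,\quad {}^\forall P\in\mathcal{M},
\end{align*}
which is precisely the definition of $\theta_\varphi$-projection.

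The main technical subtlety, rather than any hard obstacle, is making sure the affine structure of $\mathcal{M}$ is used correctly so that the tangent space equals the set of displacements $P-P^*$; without $\eta$-autoparallelism, the first-order condition would only give orthogonality to tangent vectors, not to all secants $\eta(P)-\eta(P^*)$, and the equivalence with (b) would fail. Everything else is routine algebra built on top of the three-point identity.
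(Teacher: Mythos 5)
Your proposal is correct and follows essentially the same route as the paper: the three-point identity (the paper's Eq.~\eqref{eqn:3dvi-relation}) gives (a)$\Leftrightarrow$(b), non-negativity plus strict convexity in the first argument gives (b)$\Rightarrow$(c), and the first-order optimality condition over the affine set $\mathcal{M}$ gives (c)$\Rightarrow$(a). The only cosmetic difference is that the paper phrases the last step via Lagrange multipliers for the affine constraints $\langle\eta(P),A_i\rangle=b_i$, while you state the stationarity condition directly; the substance is identical.
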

\begin{proof}
For any $P,P^*,Q\in\mathrm{PD}(n)$ the equality 
\begin{align}
 \label{eqn:3dvi-relation}
 D_{\varphi}(P,Q)-D_{\varphi}(P,P^*)-D_{\varphi}(P^*,Q)
 =\<\theta_\varphi(Q)-\theta_\varphi(P^*),\,\eta(P^*)-\eta(P)\> 
\end{align}
holds. The equivalence between (a) and (b) follows the above equality. 
If (b) holds, then the non-negativity of the divergence assures that $P^*$ is an optimal
solution of \eqref{eqn:opt-prob_V-projection_ontoM}. The uniqueness follows the strict
convexity of the divergence $D_{\varphi}(P,Q)$ in $P$. Hence (c) holds. 
Finally, we show that (a) follows (c). 
Let $P^*$ be an optimal solution of \eqref{eqn:opt-prob_V-projection_ontoM}.  
The $\eta$-autoparallel submanifold $\mathcal{M}$ is represented by 
\begin{align*}
 \mathcal{M}
 &=\{P\in\mathrm{PD}(n)~|~\<\eta(P),A_i\>=b_i,\,i=1,\ldots,k\}
\end{align*} 
in which $A_i$ is an $n$ by $n$ real matrix and $b_i\in\Real$ for $i=1,\ldots,k$. 
The optimality condition of \eqref{eqn:opt-prob_V-projection_ontoM} yields that 
\begin{align*}
 -\theta_\varphi(P^*)+\theta_\varphi(Q)=\sum_{i=1}^{k}\lambda_i A_i,\quad \lambda_i\in\Real 
\end{align*}
with some $\lambda_1,\ldots,\lambda_k$. 
In addition, the fact that both $P$ and $P^*$ are included in $\mathcal{M}$ leads to the 
equalities 
\begin{align*}
 \<\eta(P^*)-\eta(P),A_i\>=0,\quad i=1,\ldots,k.
\end{align*}
Therefore, we obtain 
 \begin{align*}
  \<\theta_\varphi(Q)-\theta_\varphi(P^*),\eta(P^*)-\eta(P)\>=0
\end{align*}
 for any $P\in\mathcal{M}$. This implies that $P^*$ is a $\theta_{\varphi}$-projection of
 $Q$ onto $\mathcal{M}$. 
\end{proof}
The uniqueness of the $\theta_\varphi$-projection onto the $\eta$-autoparallel submanifold is 
shown through the equivalence between (a) and (b) in Theorem 
\ref{theorem:Extended_Pythagorean_Theorem}. 
The similar argument is valid for $\eta$-projection onto $\theta_\varphi$-autoparallel
 submanifold. We show the result without proof.  
\begin{theorem}
 \label{theorem:m-projection-Extended_Pythagorean_Theorem}
 Let $\varphi$ be a potential function, $\mathcal{N}$ be a $\theta_\varphi$-autoparallel
 submanifold in $\mathrm{PD}(n)$,  
 and $Q$ be a positive definite matrix. 
 Then, the following conditions (a) and (b) are equivalent.
\begin{description}
 \item[(a)] $P^*$ is an $\eta$-projection of $Q$ onto $\mathcal{N}$. 
 \item[(b)] $P^*\in\mathcal{N}$ satisfies the equality
	    \begin{align}
	     \label{eqn:m-proj-PythagoreanTheorem}
	     D_{\varphi}(Q,P) =  D_{\varphi}(Q,P^*)+D_{\varphi}(P^*,P)
	    \end{align}
	    for any $P\in\mathcal{N}$. 
\end{description}
 When (a) or (b) holds, $P^*$ is the unique optimal solution of the problem 
\begin{align}
 \label{eqn:min_Vdiv_st_Vautoparallel}
 &\min_{P\in\mathrm{PD}(n)}\,D_{\varphi}(Q,P)\quad \subto P\in\mathcal{N}. 
\end{align}
\end{theorem}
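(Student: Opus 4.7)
The plan is to mirror the proof of Theorem \ref{theorem:Extended_Pythagorean_Theorem} with the roles of the $\eta$- and $\theta_\varphi$-coordinates interchanged. The skeleton has three moves: derive the dual three-point identity, read off the equivalence (a) $\iff$ (b) from it, and then deduce uniqueness and optimality from the Pythagorean equality together with non-negativity of $D_\varphi$.

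First I would establish the dual analogue of \eqref{eqn:3dvi-relation}, namely
\begin{align*}
 D_\varphi(Q,P)-D_\varphi(Q,P^*)-D_\varphi(P^*,P)
 =\<\eta(Q)-\eta(P^*),\,\theta_\varphi(P^*)-\theta_\varphi(P)\>.
\end{align*}
This follows by plugging the definition \eqref{eqn:def_bregman_div} into the left-hand side: the $\varphi(Q)$ and $\varphi(P^*)$ terms cancel, and the remaining gradient terms rearrange into $\<\nabla\varphi(P^*)-\nabla\varphi(P),\,Q-P^*\>$, which is exactly the right-hand side since $\theta_\varphi=\nabla\varphi$ and $\eta$ is the identity.

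Next, since $\mathcal{N}$ is $\theta_\varphi$-autoparallel, the set $\{\theta_\varphi(P)-\theta_\varphi(P^*)\,|\,P\in\mathcal{N}\}$ is the direction subspace of an affine subspace in the $\theta_\varphi$-chart. The defining condition for $P^*$ to be the $\eta$-projection is that $\<\eta(Q)-\eta(P^*),\,\theta_\varphi(P)-\theta_\varphi(P^*)\>=0$ for every $P\in\mathcal{N}$, which by the identity above is equivalent to the Pythagorean relation \eqref{eqn:m-proj-PythagoreanTheorem} holding for every $P\in\mathcal{N}$. This yields (a) $\iff$ (b).

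Finally, assuming (b), for every $P\in\mathcal{N}$ we have $D_\varphi(Q,P)=D_\varphi(Q,P^*)+D_\varphi(P^*,P)\ge D_\varphi(Q,P^*)$, with equality if and only if $D_\varphi(P^*,P)=0$; strict convexity of $\varphi$ then forces $P=P^*$, so $P^*$ is the unique optimal solution of \eqref{eqn:min_Vdiv_st_Vautoparallel}. The hardest piece of the argument is really the bookkeeping in deriving the three-point identity, since unlike in Theorem \ref{theorem:Extended_Pythagorean_Theorem} one cannot appeal to convexity of the objective $P\mapsto D_\varphi(Q,P)$ — it need not be convex in its second argument — which is exactly why the present theorem only asserts the equivalence (a) $\iff$ (b) rather than a three-way equivalence including a first-order optimality characterisation, and why optimality is extracted a posteriori from (b) via non-negativity of the Bregman divergence.
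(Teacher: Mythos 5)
Your proof is correct and is exactly the ``similar argument'' the paper alludes to when it states this theorem without proof: the dual three-point identity you derive is the mirror image of \eqref{eqn:3dvi-relation}, the equivalence (a) $\iff$ (b) follows from it verbatim, and optimality plus uniqueness drop out of (b) via non-negativity of the divergence. Your closing remark correctly identifies why the theorem is only a two-way equivalence --- $D_\varphi(Q,P)$ need not be convex in $P$, so the first-order optimality argument used for (c) $\Rightarrow$ (a) in Theorem \ref{theorem:Extended_Pythagorean_Theorem} is unavailable --- which matches the paper's own comment following the theorem.
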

The Bregman divergence $D_{\varphi}(Q,P)$ may not be convex in $P$, and hence the conditions (a) or
(b) in Theorem \eqref{theorem:m-projection-Extended_Pythagorean_Theorem} 
is not necessarily derived from the optimality condition of 
\eqref{eqn:min_Vdiv_st_Vautoparallel}. 

As shown in Section \ref{sec:Introduction},  the BFGS/DFP update formulae are
derived by minimizing the KL-divergence. 
Example \ref{example:doubly-autoparallel-secant-cond} shows that the submanifold
associated with the secant condition $\mathcal{M}=\{B\in\mathrm{PD}(n)~|~Bs_k=y_k\}$ is
doubly autoparallel with respect to the flatness defined from the potential 
$V(z)=-\log z$. Thus, we obtain the following geometrical interpretation,  
\begin{description}
 \item[BFGS update:] $\theta_V$-projection of $B_k$ onto the $\eta$-autoparallel
	    submanifold $\mathcal{M}$, 
 \item[DFP update:] $\eta$-projection of $B_k$ onto the $\theta_V$-autoparallel
	    submanifold $\mathcal{M}$. 
\end{description}
Figure \ref{fig:dfp-bfgs-figure} presents the geometrical view of the standard
quasi-Newton updates based on information geometry. 
\begin{figure}[t]
 \begin{center}
  \scalebox{0.5}{\includegraphics{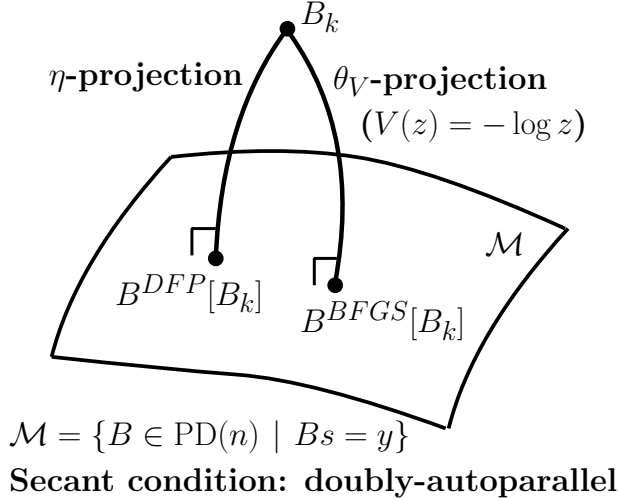}}
 \end{center}
 \vspace*{-5mm}
 \caption{
 Geometrical interpretation of quasi-Newton updates. 
 For the potential $V(z)=-\log z$, the submanifold $\mathcal{M}$ defined by the secant
 condition is doubly autoparallel with respect to $\eta$- and $\theta_V$-coordinate
 systems. 
 The BFGS formula $B^{\mathrm{BFGS}}[B_{k}]$ is given as the $\theta_V$-projection of $B_k$ onto the
 $\eta$-autoparallel submanifold $\mathcal{M}$, and the DFP update
 $B^{\mathrm{DFP}}[B_{k}]$ is given as the $\eta$-projection of $B_k$ onto the
 $\theta_V$-autoparallel submanifold $\mathcal{M}$. } 
 \label{fig:dfp-bfgs-figure}
\end{figure}

\section{quasi-Newton Methods based on Bregman Divergences}
\label{sec:potential-function_quasi-Newton}
We consider quasi-Newton update formulae derived from variational problems 
with respect to Bregman divergences. 
As shown in Section \ref{sec:Introduction}, the standard quasi-Newton updates are derived
from the minimization problem of the KL-divergence. We show that Bregman divergences lead
extended update formulae. In addition, an explicit expression of the extended Hessian
update formula is presented. 

We consider the minimization problem of the Bregman divergence 
instead of the KL-divergence. 
The extended BFGS update formula is given as the optimal solution of 
\begin{align}
 \label{eqn:Bregman-BFGS-hessian-update-prob}
 \min_{B\in\mathrm{PD}(n)}\ D_{\varphi}(B,B_k),\quad \text{subject to}\ \ Bs_k=y_k. 
\end{align}
Suppose that the optimal solution $B_{k+1}$ exists. 
Then $B_{k+1}$ is the unique $\theta_\varphi$-projection of $B_k$ onto the submanifold
defined from the secant condition. On the other hand, as the extension of the DFP update,
we consider the problem, 
\begin{align}
 \label{eqn:Bregman-DFP-hessian-update-prob}
 \min_{B\in\mathrm{PD}(n)}\ D_{\varphi}(B^{-1},B_k^{-1}),\quad \text{subject to}\ \ Bs_k=y_k. 
\end{align}
instead of the minimization of $\KL(B_k,B)=\KL(B^{-1},B_k^{-1})$. 
In the similar way, we can derive the quasi-Newton methods for the approximate inverse 
Hessian matrix $H_k=B_k^{-1}$. 

In the following we focus on the extension of the BFGS method
\eqref{eqn:Bregman-BFGS-hessian-update-prob}, since the same argument is valid for the 
extension of DFP method. 
A formal expression of the optimal solution is presented in the theorem below.  
\begin{theorem}
 Suppose that there exists an optimal solution
 \eqref{eqn:Bregman-BFGS-hessian-update-prob}. 
 Then the optimal solution $B_{k+1}$ is unique and satisfies 
 \begin{align*}
  B_{k+1}=
  \nabla\varphi^*\big(\nabla\varphi(B_k)+ s_k\lambda^\top+\lambda s_k^\top\big),\qquad
  B_{k+1}s_k=y_k, 
 \end{align*}
 where $\lambda\in\Real^n$ is a column vector and $\varphi^*$ is the Fenchel conjugate
 function of $\varphi$. 
\end{theorem}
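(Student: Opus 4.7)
The plan is to derive the optimality condition via Lagrange multipliers, then invert the gradient map of $\varphi$ using the Fenchel conjugate.

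Uniqueness comes for free: the map $B\mapsto D_\varphi(B,B_k)=\varphi(B)-\varphi(B_k)-\<\nabla\varphi(B_k),B-B_k\>$ is the sum of the strictly convex $\varphi(B)$ and an affine function, so it is strictly convex on $\mathrm{PD}(n)$; the feasible set $\mathcal{M}=\{B\in\mathrm{PD}(n):Bs_k=y_k\}$ is convex, so any minimizer is unique. This also matches the uniqueness statement in Theorem \ref{theorem:Extended_Pythagorean_Theorem}, since $\mathcal{M}$ is $\eta$-autoparallel.

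For the explicit form, I would first reformulate the vector constraint $Bs_k=y_k$ as inner-product constraints with \emph{symmetric} test matrices, so that the Lagrange calculus respects the symmetry of $B$. Writing the $i$-th scalar equation as $\<B,\tfrac{1}{2}(e_is_k^\top+s_ke_i^\top)\>=(y_k)_i$, the constraints take the form $\<B,A_i\>=b_i$ with symmetric $A_i$. The Lagrangian
\begin{align*}
 L(B,\mu)=D_\varphi(B,B_k)-\sum_{i=1}^n \mu_i\bigl(\<B,A_i\>-b_i\bigr)
\end{align*}
has stationarity condition, since $\nabla_B D_\varphi(B,B_k)=\nabla\varphi(B)-\nabla\varphi(B_k)$,
\begin{align*}
 \nabla\varphi(B)-\nabla\varphi(B_k)=\sum_{i=1}^n \mu_i A_i=\tfrac{1}{2}(\mu s_k^\top+s_k\mu^\top),
\end{align*}
where $\mu=(\mu_1,\dots,\mu_n)^\top$. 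Setting $\lambda=\mu/2$ gives $\nabla\varphi(B_{k+1})=\nabla\varphi(B_k)+s_k\lambda^\top+\lambda s_k^\top$. Applying the identity $\nabla\varphi^*=(\nabla\varphi)^{-1}$ quoted from \cite[Theorem 26.5]{rockafellar70:_convex_analy} in Section \ref{sec:Elements_Information_Geometry}, valid on the image of $\nabla\varphi$, yields the first displayed equation; the second is the primal feasibility $B_{k+1}s_k=y_k$.

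The main subtle point is the symmetry reduction: varying $B$ over all $n\times n$ matrices versus over $\mathrm{Sym}(n)$ would a priori give different Lagrange expressions, and the conclusion $s_k\lambda^\top+\lambda s_k^\top$ is manifestly symmetric. Choosing the symmetrized test matrices $A_i=\tfrac{1}{2}(e_is_k^\top+s_ke_i^\top)$ resolves this cleanly, because any linear functional on $\mathrm{Sym}(n)$ that vanishes on the constraint set is a combination of these symmetric $A_i$'s, and the combination automatically takes the rank-two symmetric form claimed. Beyond this, one should also remark that the stationarity argument is legitimate because the constraint gradients $\{A_i\}_{i=1}^n$ are linearly independent (as $s_k\neq 0$ whenever $\mathcal{M}\neq\emptyset$), so the Lagrange multipliers exist; and the stationary point is global since $D_\varphi(\,\cdot\,,B_k)$ is strictly convex in its first argument.
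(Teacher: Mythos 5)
Your proof is correct and follows essentially the same route as the paper: uniqueness from strict convexity of $D_\varphi(\cdot,B_k)$ on the convex feasible set, the first-order condition identifying $\nabla\varphi(B_{k+1})-\nabla\varphi(B_k)$ with a symmetric rank-two normal vector $s_k\lambda^\top+\lambda s_k^\top$ of $\mathcal{M}$, and inversion via $\nabla\varphi^*=(\nabla\varphi)^{-1}$. The only difference is that you derive the characterization of the normal space directly from the symmetrized constraint matrices $A_i=\tfrac{1}{2}(e_is_k^\top+s_ke_i^\top)$, whereas the paper verifies one inclusion by hand and cites G\"{u}ler et al.\ for the converse.
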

\begin{proof}
 Since \eqref{eqn:Bregman-BFGS-hessian-update-prob} is a convex problem and 
 the objective function $D_{\varphi}(B,B_k)$ is strictly convex in $B$, we see that the optimal
 solution is unique if it exists. 
 Suppose that $B_{k+1}$ is the optimal solution of
 \eqref{eqn:Bregman-BFGS-hessian-update-prob}, then $B_{k+1}$ satisfies the optimality
 condition. According to G\"{u}ler, et al. \cite{guler09:_dualit_in_quasi_newton_method}, 
 the normal vector of the affine subspace 
 $\mathcal{M}=\{B\in\mathrm{PD}(n)~|~Bs_k=y_k\}$ 
 is characterized by the form of 
  \begin{align*}
  s_k\lambda^\top+\lambda s_k^\top\in\mathrm{Sym}(n),\qquad \lambda\in\Real^n. 
  \end{align*}
In fact for $B_1, B_2\in\mathcal{M}$ we have
\begin{align*}
  \<s_k\lambda^\top+\lambda s_k^\top,\,B_1-B_2\>
  &=
  \lambda^\top B_1 s_k+ s_k^\top B_1 \lambda
  -\lambda^\top B_2 s_k- s_k^\top B_2 \lambda\\
  &=
  \lambda^\top y_k+ y_k^\top \lambda
  -\lambda^\top y_k-y_k^\top \lambda\\
  &=0, 
\end{align*}
and thus $s_k\lambda^\top+\lambda s_k^\top$ is a normal vector of $\mathcal{M}$. 
G\"{u}ler, et al.~\cite{guler09:_dualit_in_quasi_newton_method} have shown that 
the normal vector is restricted to the expression above. 
Hence, for the optimal solution $B_{k+1}$ there exists $\lambda\in\Real^n$ such that 
$\nabla D_{\varphi}(B,B_k)\big|_{B=B_{k+1}}=s_k\lambda^\top+\lambda s_k^\top$
and $B_ks_k=y_k$ hold. The first equality is represented as 
$\nabla\varphi(B_{k+1})-\nabla\varphi(B_{k})=s_k\lambda^\top+\lambda s_k^\top$. 
The existence of $B_{k+1}$ assures that 
$B_{k+1}=\nabla\varphi^*\big(\nabla\varphi(B_{k})+s_k\lambda^\top+\lambda s_k^\top\big)$, 
where $\varphi^*$ is the Fenchel conjugate of $\varphi$ defined in \eqref{eqn:Fenchel-dual}. 
\end{proof}

For general Bregman divergences, we do not have the explicit expression of the Hessian
update formula. As a special case, we consider the minimization problem of the $V$-Bregman
divergence, 
\begin{align}
 \label{eqn:V-BFGS-hessian-update-prob}
 \text{$V$-BFGS:}\qquad 
 \min_{B\in\mathrm{PD}(n)}\ D_V(B,B_k),\quad \text{subject to}\ \ Bs_k=y_k. 
\end{align}
The update formula obtained by the problem above is referred to as the $V$-BFGS update
formula. The theorem below shows an explicit expression of the $V$-BFGS update formula. 
\begin{theorem}[$V$-BFGS update formula]
 \label{theorem:V-BFGS-form}
 Suppose the function $V$ is a potential function defined in 
 Definition \ref{def:V-Bregman-div_potential}. 
 Let $B_k\in\mathrm{PD}(n)$, and suppose $s_k^\top y_k>0$. 
 Then the problem \eqref{eqn:V-BFGS-hessian-update-prob} has the unique 
 optimal solution $B_{k+1}\in\mathrm{PD}(n)$ satisfying 
\begin{align}
 B_{k+1} &=
 \frac{\nu(\det{B_{k+1}})}{\nu(\det{B_k})}B^{BFGS}[B_k;s_k,y_k]
 +\bigg(1-\frac{\nu(\det{B_{k+1}})}{\nu(\det{B_k})}\bigg) 
 \frac{y_ky_k^\top}{s_k^\top y_k}. 
 \label{eqn:update-formula-V-BFGS}
\end{align}
\end{theorem}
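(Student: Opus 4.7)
The plan is to use Lagrangian optimality conditions and exploit the explicit form of $\nabla\varphi$ for a $V$-potential. Since $\varphi(B) = V(\det B)$, a direct computation (already noted above the theorem) gives $\nabla\varphi(B) = -\nu(\det B)\,B^{-1}$. Invoking the optimality characterization from the preceding theorem, any optimizer $B_{k+1}$ of problem \eqref{eqn:V-BFGS-hessian-update-prob} must satisfy, for some $\lambda\in\Real^n$,
\begin{align*}
\nu(\det B_k)\,B_k^{-1} - \nu(\det B_{k+1})\,B_{k+1}^{-1} = s_k\lambda^\top + \lambda s_k^\top,\qquad B_{k+1}s_k = y_k.
\end{align*}
Uniqueness of the minimizer (given existence) is immediate from strict convexity of $D_V(\,\cdot\,,B_k)$, which holds under \eqref{eqn:beta-condition}. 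Existence requires showing the infimum is attained in the open cone $\mathrm{PD}(n)$; I would argue this using \eqref{eqn:nu-limit-condition} to show the objective blows up along any approach to the boundary $\partial\mathrm{PD}(n)$ consistent with the secant constraint.

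To extract the explicit formula, multiply the optimality equation on the right by $y_k$ and use $B_{k+1}^{-1}y_k = s_k$ to obtain
\begin{align*}
\nu(\det B_k)\,B_k^{-1}y_k - \nu(\det B_{k+1})\,s_k = (\lambda^\top y_k)\,s_k + (s_k^\top y_k)\,\lambda.
\end{align*}
Taking the inner product of this vector identity with $y_k$ produces a scalar equation determining $\lambda^\top y_k$; back-substitution then expresses $\lambda$ as an explicit linear combination of $B_k^{-1}y_k$ and $s_k$. Plugging this $\lambda$ back into $s_k\lambda^\top + \lambda s_k^\top$ and regrouping, I expect the emerging rank-two correction to reorganize into the standard DFP/inverse-BFGS structure, yielding
\begin{align*}
B_{k+1}^{-1} = \frac{\nu(\det B_k)}{\nu(\det B_{k+1})}\,\bigl(B^{BFGS}[B_k;s_k,y_k]\bigr)^{-1} + \Bigl(1 - \frac{\nu(\det B_k)}{\nu(\det B_{k+1})}\Bigr)\,\frac{s_ks_k^\top}{s_k^\top y_k}.
\end{align*}
Inverting this relation by Sherman--Morrison--Woodbury (and using $B^{BFGS}[B_k]\,s_k = y_k$ to collapse the cross terms that arise) produces exactly \eqref{eqn:update-formula-V-BFGS}. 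Positive definiteness of $B_{k+1}$ then follows because $\mu := \nu(\det B_{k+1})/\nu(\det B_k) > 0$ and because both $B^{BFGS}[B_k]$ and $y_ky_k^\top/(s_k^\top y_k)$ are positive semidefinite with the first strictly so.

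The main obstacle is the \emph{implicit} character of \eqref{eqn:update-formula-V-BFGS}: the mixing coefficient already depends on $\det B_{k+1}$, so the derivation produces a self-consistency equation rather than a closed-form update. The formula is rigorous as a characterization, but actually pinning down $\nu(\det B_{k+1})$ requires taking determinants of both sides and solving a scalar equation in $\det B_{k+1}$, which I expect is addressed in subsequent results. A secondary technical point is the boundary analysis needed for existence, which is not automatic from strict convexity alone and leverages exactly the asymptotic condition \eqref{eqn:nu-limit-condition}.
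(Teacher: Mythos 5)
Your overall strategy---the normal-vector characterization $s_k\lambda^\top+\lambda s_k^\top$ of G\"{u}ler et al., the explicit gradient $\nabla\varphi(B)=-\nu(\det B)B^{-1}$, uniqueness from strict convexity, and the recognition that the formula is only a self-consistency relation in $\det B_{k+1}$---is exactly the paper's, modulo running the rank-two computation as a derivation rather than a verification. The genuine gap is in your existence argument. You propose to show the infimum is attained in the open cone by arguing that \eqref{eqn:nu-limit-condition} forces $D_V(\,\cdot\,,B_k)$ to blow up at $\partial\PD(n)$. That claim is false for admissible potentials: for the power potential $V(z)=(1-z^\gamma)/\gamma$ with $0<\gamma<1/n$ (which satisfies both \eqref{eqn:beta-condition} and \eqref{eqn:nu-limit-condition}), one has $V(z)\to 1/\gamma$ as $z\to 0^+$, so $D_V(B,B_k)$ stays bounded as $B$ approaches a singular matrix along a bounded set. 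No boundary blow-up is available, and coercivity alone does not rule out the minimizer escaping to the boundary of the cone.

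The paper closes this hole constructively, and the mechanism is the one you half-identify but do not carry out: Lemma \ref{lemma:sol_existence-nu-equation} shows that $\zeta(z)=\log\bigl(z/\nu(z)^{n-1}\bigr)$ is a strictly increasing bijection of $\Real_+$ onto $\Real$ --- the lower end $\zeta(z)\to-\infty$ uses precisely \eqref{eqn:nu-limit-condition}, and the monotonicity $\zeta'(z)>1/(nz)$ uses \eqref{eqn:beta-condition}. Hence the scalar equation
\begin{align*}
\frac{\det\bigl(B^{BFGS}[B_k;s_k,y_k]\bigr)}{\nu(\det B_k)^{n-1}}\,\nu(z)^{n-1}=z
\end{align*}
has a unique root $z^*$; substituting $z^*$ for $\det B_{k+1}$ in \eqref{eqn:update-formula-V-BFGS} produces a concrete matrix whose determinant is verified (via $\det(A+vu^\top)=\det(A)(1+u^\top A^{-1}v)$) to equal $z^*$, so an interior point satisfying the KKT system exists, and convexity of \eqref{eqn:V-BFGS-hessian-update-prob} promotes it to the global minimizer. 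You need to replace your boundary argument with this fixed-point construction (or an equivalent one); everything else in your proposal then goes through.
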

Though the theorem is proved in \cite{kanamori10:_bregm_exten_of_quasi_newton_updat_ii}, 
the proof is also found in Appendix \ref{appendix:VBFGS-formula} of the present paper 
as a supplementary. 
In the same way, we can obtain the explicit formula of the $V$-DFP update formula, which
is the minimizer of $D_V(B^{-1},B_k^{-1})$ subject to $Bs_k=y_k$. 
The update formula is equivalent to the self-scaling quasi-Newton update 
defined as 
\begin{align}
 \label{eqn:self-scaling}
 B_{k+1} &=\theta_k B^{BFGS}[B_k;s_k,y_k]
 +(1-\theta_k) \frac{y_ky_k^\top}{s_k^\top y_k}, 
\end{align}
where $\theta_k$ is a positive real number. 
Various choices for $\theta_k$ have been proposed, see
\cite{oren74:_self_scalin_variab_metric_ssvm,nocedal93:_analy_of_self_scalin_quasi_newton_method}. 
A popular choice is $\theta_k=s_k^\top y_k/s_k^\top B_ks_k$. 
In the $V$-BFGS update formula, the
coefficient $\theta_k$ is determined from the function $\nu$. 

We present a practical way of computing the Hessian approximation
\eqref{eqn:update-formula-V-BFGS}. Details are shown in the sequel 
\cite{kanamori10:_bregm_exten_of_quasi_newton_updat_ii}. 
In Eq~\eqref{eqn:update-formula-V-BFGS}, the optimal solution $B_{k+1}$ appears in both 
sides, that is, we have only the implicit expression of $B_{k+1}$. 
The numerical computation is, however, efficiently conducted as well as the standard BFGS 
update. To compute the matrix $B_{k+1}$, first we compute the determinant $\det B_{k+1}$.  
The determinant of both sides of \eqref{eqn:update-formula-V-BFGS} leads to 
\begin{align}
 \det B_{k+1}=\frac{\det(B^{BFGS}[B_k;s_k,y_k])}{\nu(\det B_{k})^{n-1}}\cdot\nu(\det
 B_{k+1})^{n-1}. 
 \label{eqn:determinant-V-BFGS}
\end{align}
Hence, by solving the nonlinear equation 
\begin{align*}
 z=\frac{\det(B^{BFGS}[B_k;s_k,y_k])}{\nu(\det B_{k})^{n-1}}\cdot\nu(z)^{n-1},\qquad z>0
\end{align*}
we can find $\det B_{k+1}$. 
As shown in the proof of Theorem \ref{theorem:V-BFGS-form}, the function $z/\nu(z)^{n-1}$
is monotone increasing. Hence the Newton method is available to find the root of the above
equation efficiently. 
Once we obtain the value of $\det B_{k+1}$, we can compute the Hessian approximation $B_{k+1}$ by
substituting $\det B_{k+1}$ into Eq~\eqref{eqn:update-formula-V-BFGS}. 
Figure \ref{fig:V-BFGS-update} shows the update algorithm of the $V$-BFGS formula which
exploits the Cholesky decomposition of the approximate Hessian matrix. 
By maintaining the Cholesky decomposition, we can easily compute the the determinant and
the search direction. 
The convergence property of the quasi-Newton method with the $V$-BFGS update formula is
considered in \cite{kanamori10:_bregm_exten_of_quasi_newton_updat_ii}. 

\begin{figure}[p]
 \centering 
 \fbox{
 \begin{minipage}{0.9\linewidth}
\begin{description}
 \item[$V$-BFGS update:] 
 \item[Initialization:] 
            The function $\nu(z)$ denotes $-V'(z)z$. 
            Let $B_0\in\mathrm{PD}(n)$ be a matrix which is an initial
            approximation of the Hessian matrix, and $L_0L_0^\top=B_0$ be 
            the Cholesky decomposition of $B_0$. Let $x_0\in\Real^n$ be an initial point,
            and set $k=0$. 
 \item[Repeat:] If stopping criterion is satisfied, go to Output. 
            \begin{enumerate}
             \item Let $x_{k+1}=x_k-\alpha_k B_k^{-1}\nabla f(x_k)$, 
		   where $\alpha_k\geq 0$ is a step length satisfying the Wolfe condition
		   \cite[Section 3.1]{nocedal99:_numer_optim}. 
		   The Cholesky decomposition $B_k=L_kL_k^\top$ is available to compute
                   $B_k^{-1}\nabla f(x_k)$. 
	     \item Set $s_k=x_{k+1}-x_k$ and $y_k=\nabla f(x_{k+1})-\nabla f(x_k)$. 
             \item Update $L_k$ to $\bar{L}$ which is the Cholesky decomposition 
		   of $B^{BFGS}[B_k;s_k,y_k]$, that is, 
                   \begin{align*}
                    \bar{L}\bar{L}^\top
		    =B^{BFGS}[B_k;s_k,y_k]=B^{BFGS}[L_kL_k^\top;s_k,y_k]. 
                   \end{align*}
                   The Cholesky decomposition with rank-one update is available. 
             \item Compute 
		   \begin{align*}
		    C=\frac{(\det{\bar{L}})^2}{\nu((\det{L_k})^2)^{n-1}}
		   \end{align*}
		   and find the root of the equation 
                   \begin{align*}
                    C\cdot\nu(z)^{n-1} = z,\qquad z>0. 
                   \end{align*}
                   Let the solution be $z^*$. 
             \item Compute the Cholesky decomposition $L_{k+1}$ such that 
                   \begin{align*}
		    L_{k+1}L_{k+1}^\top = 
		    \frac{\nu(z^{*})}{\nu((\det{L_k})^2)}
		    \bar{L}\bar{L}^\top 
                    +\bigg(1-\frac{\nu(z^{*})}{\nu((\det{L_k})^2)}\bigg) 
		    \frac{y_ky_k^\top}{s_k^\top y_k}. 
                   \end{align*}
             \item $k\leftarrow k+1$. 
            \end{enumerate}
 \item[Output:]  Local optimal solution $x_{k}$. 
\end{description}
 \end{minipage}}
 \caption{Pseudo code of $V$-BFGS method. 
 The Cholesky decomposition with rank-one update is useful in the algorithm. }
 \label{fig:V-BFGS-update}
\end{figure}

\begin{example}
 \label{example:VBFGS-power-div}
 We show the $V$-BFGS formula derived from the power potential. 
 Let $V(z)$ be the power potential $V(z)=(1-z^\gamma)/\gamma$ with $\gamma<1/n$. 
 As shown in Example \ref{example:power-div}, we have $\nu(z)=z^\gamma$. 
 Due to the equality 
 \begin{align*}
  \det(B^{BFGS}[B_k;s_k,y_k])=\det(B_k)\frac{s_k^\top y_k}{s_k^\top B_ks_k}
 \end{align*}
 and Eq.~\eqref{eqn:determinant-V-BFGS}, we have 
 \begin{align*}
  \frac{\nu(\det{B_{k+1}})}{\nu(\det{B_k})}=
  \left(\frac{s_k^\top y_k}{s_k^\top B_ks_k}\right)^\rho,\qquad
  \rho=\frac{\gamma}{1-(n-1)\gamma}. 
 \end{align*}
 Then the $V$-BFGS update formula is given as 
\begin{align*}
 B_{k+1} =
\left(\frac{s_k^\top y_k}{s_k^\top B_ks_k}\right)^\rho
 B^{BFGS}[B_k;s_k,y_k]
 +\bigg(1-\left(\frac{s_k^\top y_k}{s_k^\top B_ks_k}\right)^\rho
 \bigg)
 \frac{y_ky_k^\top}{s_k^\top y_k}. 
\end{align*}
 For $\gamma$ such that $\gamma<1/n$, we have $-1/(n-1)<\rho<1$. 
 In the standard self-scaling update formula \eqref{eqn:self-scaling}, 
 the above matrix $B_{k+1}$ with $\rho=1$ is used, 
 while it is not derived from the strictly convex potential function. 
\end{example}

\section{Invariance of Update Formulae under Group Action}
\label{sec:Invariance}
In this section we study the invariance of the $V$-BFGS update formula
\eqref{eqn:update-formula-V-BFGS} under the affine coordinate transformation of the
optimization variable. 
For the minimization problem of the function $f(x)$, let us consider the variable change
of $x$. For a non-degenerate matrix $T\in\mathrm{GL}(n)$, the variable change is defined
by 
\begin{align}
 x=T^{-1}\widetilde{x}, 
 \label{eqn:variable_change}
\end{align}
then the function $f(x)$ is transformed to $\widetilde{f}(\widetilde{x})$ defined as 
\begin{align*}
 \widetilde{f}(\widetilde{x})=f(T^{-1}\widetilde{x}). 
\end{align*}
Then we have 
\begin{align*}
 \nabla\widetilde{f}(\widetilde{x})=(T^\top)^{-1}\nabla f(T^{-1}\widetilde{x}),\qquad
 \nabla^2\widetilde{f}(\widetilde{x})=(T^\top)^{-1}(\nabla^2 f(T^{-1}\widetilde{x})) T^{-1}. 
\end{align*}
Our concern is how the point sequence $\{x_k\}_{k=1}^\infty$ generated by the $V$-BFGS
method is transformed by the variable change \eqref{eqn:variable_change}. 

We consider the Hessian approximation matrix under the variable change. Let
$B_k\in\mathrm{PD}(n)$ be the Hessian approximation computed at the $k$-th step of the
$V$-BFGS update for the minimization of $f(x)$. We now define  
\begin{align*} 
 \widetilde{x}_k = Tx_k,\qquad \widetilde{B}_k = (T^\top)^{-1}B_kT^{-1}. 
\end{align*}
Let $\widetilde{B}_{k+1}$ be the Hessian approximation matrix updated from 
$\widetilde{B}_k$ for the function $\widetilde{f}(\widetilde{x})$, where we suppose that
the $V$-BFGS method is used for the minimization of $\widetilde{f}(\widetilde{x})$. 
We consider the relation between $B_{k+1}$ and $\widetilde{B}_{k+1}$. 
The updated point $\widetilde{x}_{k+1}$ 
is determined by 
\begin{align*}
 \widetilde{x}_{k+1}= \widetilde{x}_{k}-
 \widetilde{\alpha}_k \widetilde{B}_k^{-1}\nabla \widetilde{f}(\widetilde{x}_k), 
\end{align*}
where $\widetilde{\alpha}_k$ is a non-negative real number determined 
by a line search. Then we have 
\begin{align}
 \widetilde{f}(\widetilde{x}_{k}-
 \widetilde{\alpha}_k \widetilde{B}_k^{-1}\nabla \widetilde{f}(\widetilde{x}_k))
 =
 \widetilde{f}(T(x_k-\widetilde{\alpha}_k B_k^{-1}\nabla f(T^{-1}\widetilde{x}_k)))
 =
 f(x_k-\widetilde{\alpha}_k B_k^{-1}\nabla f(x_k)). 
 \label{eqn:step-length-equality}
\end{align}
Let $\alpha_k$ be the step length for the function $f(x)$ at the $k$-th step of the
$V$-BFGS method. 
Due to the equality \eqref{eqn:step-length-equality}, we see that the step length
$\widetilde{\alpha}_k$ is identical to $\alpha_k$, if the line search with the same
stopping rule is applied for both $f(x)$ and $\widetilde{f}(\widetilde{x})$. 
As the result, the equality $\widetilde{x}_{k+1}=Tx_{k+1}$ holds under the condition
$\alpha_k=\widetilde{\alpha}_k$. Let $\widetilde{s}_k$ and $\widetilde{y}_k$ be 
\begin{align*}
 \widetilde{s}_k=\widetilde{x}_{k+1}-\widetilde{x}_{k},\qquad
 \widetilde{y}_k=\nabla\widetilde{f}(\widetilde{x}_{k+1})-\nabla\widetilde{f}(\widetilde{x}_{k})
\end{align*}
then we obtain the equalities, 
\begin{align*}
 \widetilde{s}_k=Ts_k,\qquad  \widetilde{y}_k=(T^\top)^{-1}y_k. 
\end{align*}

\begin{figure}[t]
 \begin{center}
  \scalebox{0.6}{\includegraphics{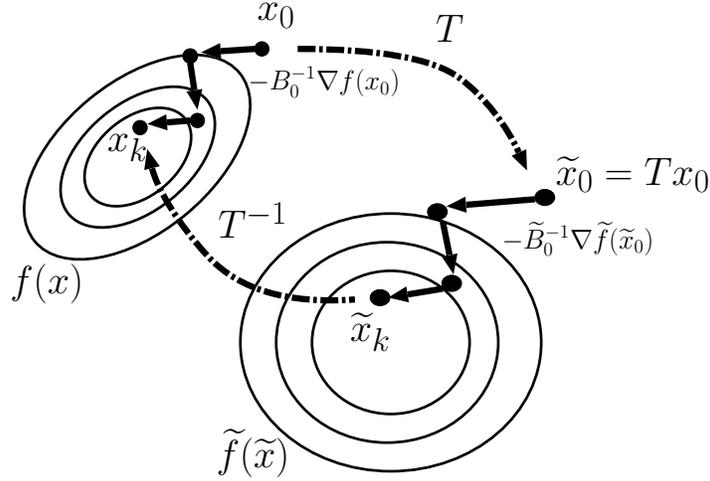}}
 \end{center}
 \vspace*{-5mm}
 \caption{The coordinate transformation between $x$ of the function $f$ and
 $\widetilde{x}$ of the function $\widetilde{f}$ is depicted. The initial point $x_0$ is
 transformed to 
 $\widetilde{x}_0=Tx_0$ and the search direction at $x_0$ is also transformed to
 $-\widetilde{B}_0^{-1}\nabla\widetilde{f}(\widetilde{x}_0)$. The quasi-Newton method is
 applied to both $f(x)$ and $\widetilde{f}(\widetilde{x})$, and then 
 the points $x_k$ and $\widetilde{x}_k$ are obtained in each coordinate system. 
 If the equality $T^{-1}\widetilde{x}_k=x_k$ holds, 
 the optimization algorithm is invariant under the transformation with $T$. } 
 \label{fig:invariance}
\end{figure}

We consider the condition of $T$ such that the equality 
\[
T^\top\widetilde{B}_{k+1}T=B_{k+1},
\]
holds, when $\widetilde{x}_k=Tx_k$ and $\widetilde{B}_k=(T^\top)^{-1}B_k T^{-1}$ are
satisfied. 
For such $T$, the equality $\widetilde{x}_{k+1}=T x_{k+1}$ recursively holds. 
This implies that the point sequence obtained by the $V$-BFGS method is invariant under
the affine transformation \eqref{eqn:variable_change}. In the optimization of
$\widetilde{f}(\widetilde{x})$ by the $V$-BFGS method, the matrix $\widetilde{B}_k$ is
updated to $\widetilde{B}_{k+1}$ such that 
\begin{align*}
 \widetilde{B}_{k+1}
 &=
 \frac{\nu(\det{\widetilde{B}_{k+1}})}{\nu(\det{\widetilde{B}_k})}
 B^{BFGS}[\widetilde{B}_k;\widetilde{s}_k,\widetilde{y}_k]
 +
 \bigg(1-\frac{\nu(\det{\widetilde{B}_{k+1}})}{\nu(\det{\widetilde{B}_k})}\bigg)
 \frac{\widetilde{y}_k\widetilde{y}_k^\top}{\widetilde{s}_k^\top \widetilde{y}_k}. 
\end{align*}
Some calculation yields that 
\begin{align}
 T^\top \widetilde{B}_{k+1}T
 &=
 \frac{\nu(\det{\widetilde{B}_{k+1}})}{\nu(\det{\widetilde{B}_k})}
 B^{BFGS}[B_k;s_k,y_k]
 +
 \bigg(1-\frac{\nu(\det{\widetilde{B}_{k+1}})}{\nu(\det{\widetilde{B}_k})}\bigg)
 \frac{y_ky_k^\top}{s_k^\top y_k}. 
\label{eqn:T-group-action-B-}
\end{align}
The following theorem provides a sufficient condition on $T$ such that 
$T^\top\widetilde{B}_{k+1}T=B_{k+1}$ holds. 
\begin{theorem}
 Suppose that $T\in\mathrm{SL}(n)$, that is, $\det(T)=1$. 
 Then the equality $T^\top\widetilde{B}_{k+1}T=B_{k+1}$ holds for any $V$-BFGS update
 formula. 
\end{theorem}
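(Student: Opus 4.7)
The plan is to compare the formula \eqref{eqn:T-group-action-B-} giving $T^\top\widetilde{B}_{k+1}T$ against the $V$-BFGS update \eqref{eqn:update-formula-V-BFGS} giving $B_{k+1}$. Both right-hand sides are convex combinations of exactly the same two matrices, $B^{BFGS}[B_k;s_k,y_k]$ and $y_ky_k^\top/(s_k^\top y_k)$, so the entire task reduces to showing that the scalar weight $\nu(\det \widetilde{B}_{k+1})/\nu(\det \widetilde{B}_k)$ equals $\nu(\det B_{k+1})/\nu(\det B_k)$ under the assumption $\det T = 1$.

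First I would handle the base quantities. From $\widetilde{B}_k=(T^\top)^{-1}B_k T^{-1}$ we get $\det \widetilde{B}_k = \det B_k / (\det T)^2$, which equals $\det B_k$ whenever $\det T=1$. So the denominator $\nu(\det \widetilde{B}_k) = \nu(\det B_k)$ is immediate. Next I would verify that the two implicitly defined quantities $\det B_{k+1}$ and $\det \widetilde{B}_{k+1}$ coincide. By \eqref{eqn:determinant-V-BFGS}, each is the unique positive root of an equation of the form $z = C\cdot\nu(z)^{n-1}$ (uniqueness coming from the monotonicity of $z/\nu(z)^{n-1}$ noted in the proof of Theorem \ref{theorem:V-BFGS-form}), so it suffices to check that the coefficient $C$ is the same in the two equations.

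For this, I would use the standard identity $\det(B^{BFGS}[B;s,y]) = \det(B)\cdot(s^\top y)/(s^\top B s)$ already recalled in Example \ref{example:VBFGS-power-div}. Substituting $\widetilde{B}_k,\widetilde{s}_k,\widetilde{y}_k$ and using $\widetilde{s}_k = Ts_k$, $\widetilde{y}_k = (T^\top)^{-1}y_k$, the two scalars transform as
\begin{align*}
\widetilde{s}_k^\top \widetilde{y}_k &= s_k^\top T^\top (T^\top)^{-1} y_k = s_k^\top y_k,\\
\widetilde{s}_k^\top \widetilde{B}_k \widetilde{s}_k &= s_k^\top T^\top (T^\top)^{-1} B_k T^{-1} T s_k = s_k^\top B_k s_k,
\end{align*}
and combined with $\det \widetilde{B}_k=\det B_k$ this yields $\det(B^{BFGS}[\widetilde{B}_k;\widetilde{s}_k,\widetilde{y}_k]) = \det(B^{BFGS}[B_k;s_k,y_k])$. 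Together with $\nu(\det \widetilde{B}_k)=\nu(\det B_k)$, the coefficient $C$ in the defining equation \eqref{eqn:determinant-V-BFGS} is indeed the same for both, so by uniqueness $\det \widetilde{B}_{k+1}=\det B_{k+1}$.

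With both numerator and denominator of the scalar weight matching, the two convex combinations coincide, giving $T^\top\widetilde{B}_{k+1}T = B_{k+1}$ directly from \eqref{eqn:T-group-action-B-}. I do not anticipate a real obstacle here: the only nontrivial step is extracting the invariance of $\det B_{k+1}$ from an implicit equation, and this is pinned down cleanly by the monotonicity of $z \mapsto z/\nu(z)^{n-1}$ used earlier. The reason $\det T=1$ is essential is precisely to preserve $\det B_k$ under the congruence transformation; any nontrivial rescaling would alter $\nu(\det \widetilde{B}_k)$ and break the ratio identity, which also explains why the invariance group is restricted from $\mathrm{GL}(n)$ to $\mathrm{SL}(n)$.
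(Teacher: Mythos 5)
Your proof is correct and follows essentially the same route as the paper: both arguments reduce the claim to showing $\det\widetilde{B}_{k+1}=\det B_{k+1}$ by observing that the two determinants satisfy the same equation $z = C\,\nu(z)^{n-1}$ and invoking the injectivity of $z\mapsto z/\nu(z)^{n-1}$. The only cosmetic difference is that you establish the common coefficient $C$ by transforming $\det\big(B^{BFGS}[\widetilde{B}_k;\widetilde{s}_k,\widetilde{y}_k]\big)$ explicitly via the rank-one determinant identity and the invariance of $s_k^\top y_k$ and $s_k^\top B_k s_k$, whereas the paper takes determinants of both sides of \eqref{eqn:T-group-action-B-} directly; these are equivalent computations.
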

\begin{proof}
 Due to the assumption $\det(T)=1$, we have $\det(B_k)=\det(\widetilde{B}_k)$. Then
 Eq.\eqref{eqn:T-group-action-B-} is equivalent with 
 \begin{align*}
 T^\top \widetilde{B}_{k+1}T
 &=
 \frac{\nu(\det{\widetilde{B}_{k+1}})}{\nu(\det{B_k})}B^{BFGS}[B_k;s_k,y_k]
 +\bigg(1-\frac{\nu(\det{\widetilde{B}_{k+1}})}{\nu(\det{B_k})}\bigg)
 \frac{y_ky_k^\top}{s_k^\top y_k}. 
 \end{align*}
 Hence, the determinant of $T^\top \widetilde{B}_{k+1}T$ yields the equality 
\begin{align*}
 \frac{\det(\widetilde{B}_{k+1})}{\nu(\det{\widetilde{B}_{k+1}})^{n-1}}
 &=
 \frac{\det\big(B^{BFGS}[B_k;s_k,y_k]\big)}{\nu(\det{B_k})^{n-1}}, 
\end{align*}
 where $\det(T^\top\widetilde{B}_{k+1}T)=\det{\widetilde{B}_{k+1}}$ is used. 
On the other hand, the matrix $B_{k+1}$ defined by the $V$-BFGS update formula
 \eqref{eqn:update-formula-V-BFGS} also satisfies, 
\begin{align*}
 \frac{\det(B_{k+1})}{\nu(\det{B_{k+1}})^{n-1}}
 &=
 \frac{\det\big(B^{BFGS}[B_k;s_k,y_k]\big)}{\nu(\det{B_k})^{n-1}}, 
\end{align*}
As shown in the proof of Theorem \ref{theorem:V-BFGS-form}, the function $z/\nu(z)^{n-1}$
is one to one mapping, and thus we have $\det{\widetilde{B}_{k+1}}=\det{B_{k+1}}$. 
Therefore, the equality $T^\top \widetilde{B}_{k+1}T=B_{k+1}$ holds. 
\end{proof}

Next, we study the variable change with $T\in\mathrm{GL}(n)$. Below we assume $\nu(1)=1$
without loss of generality. Let us define 
\begin{align*}
 b_k=\det{B_k},\quad b_{k+1}=\det{B_{k+1}},\quad
 \widetilde{b}_{k+1}=\det{\widetilde{B}_{k+1}},\quad  
 t=\det{T}
\end{align*}
and 
\begin{align*}
 a=\frac{\det{B^{BFGS}[B_k;s_k,y_k]}}{\nu(\det{B_k})^{n-1}}. 
\end{align*}
In the $V$-BFGS update formula, the determinant of $B_{k+1}$  leads the equality 
\begin{align}
 b_{k+1}=a\cdot\nu(b_{k+1})^{n-1}. 
 \label{eqn:determinant-B}
\end{align}
 The matrix $\widetilde{B}_{k+1}$ satisfies the update formula 
 \eqref{eqn:T-group-action-B-}, thus the determinant of both sides yields the equality 
\begin{align}
 \label{eqn:determinant-Btilde}
 \widetilde{b}_{k+1}\,t^2
 =a\cdot\left(\frac{\nu(\widetilde{b}_{k+1})\nu(b_k)}{\nu(b_kt^{-2})}\right)^{n-1}. 
\end{align}
When $T^\top \widetilde{B}_{k+1}T=B_{k+1}$ holds, Eq.\eqref{eqn:determinant-Btilde} 
is represented as 
\begin{align}
 \label{eqn:determinant-Btilde-invariant}
 b_{k+1}=a\cdot\left(\frac{\nu(b_{k+1}t^{-2})\nu(b_k)}{\nu(b_kt^{-2})}\right)^{n-1}. 
\end{align}

We consider the function $\nu$ which satisfies 
\eqref{eqn:determinant-B} and \eqref{eqn:determinant-Btilde-invariant} simultaneously. 
For a positive number $a>0$, let $b_a$ be the unique solution of the equation of $b$, 
\begin{align*}
 b=a\cdot\nu(b)^{n-1},\qquad b>0, 
\end{align*}
and $E_\nu=\{b_a\in\Real~|~a>0\}$ be the set of all possible solutions of the above
equation. 
Note that $1\in E_\nu$ holds for any $\nu$ since $1=1\cdot\nu(1)^{n-1}$ holds. 
\begin{theorem}
 \label{theorem:invariance-determinant}
 Let $\nu(z)>0$ be a differentiable function on $\Real_+$. 
 Suppose that there exists an open subset
 $E\subset\Real$ satisfying $1\in E\subset E_\nu$. 
 For the Hessian approximation by the $V$-BFGS method, 
 suppose that the equality 
 $\widetilde{B}_{k+1}=(T^\top)^{-1}B_{k+1}T^{-1}$ holds for all 
 $T\in\mathrm{GL}(n)$, all $B_k\in\PD(n)$ and all $s_k, y_k\in\Real^n$ satisfying
 $s_k^\top y_k>0$. 
 Then the function $\nu$ is equal to $\nu(z)=z^\gamma$ with some $\gamma\in\Real$. 
\end{theorem}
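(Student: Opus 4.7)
The strategy is to combine the two determinant identities derived just before the theorem into a multiplicative Cauchy equation for $\nu$, and then use differentiability to conclude that $\nu$ must be a power function.

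The first step is to combine \eqref{eqn:determinant-B} and \eqref{eqn:determinant-Btilde-invariant}: both right-hand sides equal $b_{k+1}$, and since $\nu > 0$, taking $(n-1)$-th roots (with $n \geq 2$ tacit) yields
\[
\frac{\nu(b_{k+1})}{\nu(b_{k+1}\,t^{-2})} \;=\; \frac{\nu(b_k)}{\nu(b_k\,t^{-2})}.
\]
This identity must hold for every $T \in \mathrm{GL}(n)$, every $B_k \in \PD(n)$, and every admissible $s_k, y_k$.

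The second step is to release the parameters. Since $t = \det T$ ranges over all of $\Real_+$ as $T$ varies, and $b_k = \det B_k$ ranges over $\Real_+$ as $B_k$ varies, it remains to control $b_{k+1}$. For fixed $B_k$, the quantity
\[
 a \;=\; \frac{s_k^\top y_k}{s_k^\top B_k s_k}\cdot \frac{b_k}{\nu(b_k)^{n-1}}
\]
can be tuned to any positive value by choosing $s_k, y_k$ with $s_k^\top y_k > 0$. With the normalization $\nu(1) = 1$, choosing $a = 1$ forces $b_{k+1} = 1$, since $b = 1$ solves $b = a\,\nu(b)^{n-1}$ and lies in $E$ by hypothesis. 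Substituting $b_{k+1} = 1$ and writing $u := t^{-2}$ collapses the identity to the multiplicative Cauchy equation
\[
\nu(b\,u) \;=\; \nu(b)\,\nu(u), \qquad \forall\, b, u > 0.
\]

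Finally, setting $h(x) := \log \nu(e^x)$ turns this into additivity of $h$; since $\nu$ is differentiable, $h$ is continuous, so $h(x) = \gamma x$ for some $\gamma \in \Real$, giving $\nu(z) = z^\gamma$. The principal obstacle is the decoupling step above: as $(B_k, s_k, y_k)$ varies, one must be able to fix $b_{k+1}$ at a specific value (here $1$) while still leaving $b_k$ and $t$ entirely free. The assumption $1 \in E \subset E_\nu$ is precisely what guarantees that $b_{k+1} = 1$ is attainable by the $V$-BFGS update; without this attainability, the Cauchy equation would only be available on a restricted subset and would not pin down $\nu$ globally.
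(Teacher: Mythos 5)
Your proof is correct, and it reaches the multiplicative functional equation by a genuinely different decoupling than the paper's. The paper fixes $b_k=1$ and lets $a$ vary, so that $b_{k+1}=b_a$ ranges only over $E_\nu$; it therefore obtains $\nu(b\,x)=\nu(b)\nu(x)$ only for $b\in E_\nu$, and must invoke the hypothesis that $E_\nu$ contains an open neighbourhood of $1$ to form the difference quotient at $b=1+\varepsilon$, pass to the limit, and solve the resulting ODE $\nu'(x)=\nu'(1)\nu(x)/x$. You instead fix $b_{k+1}=1$ by tuning $a=1$ (which is legitimate: $a=\frac{s_k^\top y_k}{s_k^\top B_ks_k}\cdot\frac{b_k}{\nu(b_k)^{n-1}}$ and the ratio $s_k^\top y_k/s_k^\top B_ks_k$ can be set to any positive value, e.g.\ via $y_k=cB_ks_k$, independently of $b_k$), and let $b_k$ and $t^{-2}$ range freely over $\Real_+$. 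This yields the Cauchy equation $\nu(bu)=\nu(b)\nu(u)$ on all of $\Real_+\times\Real_+$ at once, after which continuity of $h(x)=\log\nu(e^x)$ suffices. What your route buys is a cleaner endgame (no ODE, no infinitesimal argument) and the observation that the open-set hypothesis $1\in E\subset E_\nu$ is essentially redundant for your argument: you only use $1\in E_\nu$, which the paper notes holds automatically once $\nu(1)=1$. What both routes share --- and what you correctly flag as the crux --- is the need for the equation $b=a\,\nu(b)^{n-1}$ to have a well-defined (unique) solution so that ``$a=1$ forces $b_{k+1}=1$''; this uniqueness is built into the paper's definition of $b_a$ and of the $V$-BFGS update, so your appeal to it is legitimate.
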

Note that $E_\nu=\Real_+$ holds for $\nu(z)=z^\gamma$ unless $\gamma=1/(n-1)$. 
\begin{proof}
 Under the assumption, 
 the equations \eqref{eqn:determinant-B} and
 \eqref{eqn:determinant-Btilde-invariant} share the same solution $b_{k+1}$ for any
 $a>0,\, b_k>0$ and $t\neq 0$. 
 Let $b_k=1,\,x=t^{-2}>0$. For any positive $a$ and $x$, equations 
 \eqref{eqn:determinant-B} and \eqref{eqn:determinant-Btilde-invariant} lead to 
\begin{align*}
 b_a=a\cdot\nu(b_a)^{n-1}\ \ \text{and}\ \  
 b_a
 =a\cdot\nu(b_a x)^{n-1}\left(\frac{\nu(1)}{\nu(x)}\right)^{n-1}
 =a\cdot\frac{\nu(b_a x)^{n-1}}{\nu(x)^{n-1}} 
\end{align*} 
 for $b_a\in E_\nu$. Hence we obtain 
\begin{align}
 \nu(b_a x)=\nu(b_a)\nu(x),\ \ a>0,\ x>0
 \ \Longleftrightarrow\ 
 \nu(b x)=\nu(b)\nu(x),\ \ b\in E_\nu,\ x>0. 
 \label{eqn:nu-isomorphism}
\end{align}
 The assumption on $E_\nu$ guarantees that 
 $1+\varepsilon\in E_\nu$ holds for any infinitesimal $\varepsilon$. 
 Thus Eq.\eqref{eqn:nu-isomorphism} leads the following expression, 
\begin{align*}
 \frac{\nu(x (1+\varepsilon))-\nu(x)}{x\varepsilon}
 =\frac{\nu(x)}{x}\cdot\frac{\nu(1+\varepsilon)-\nu(1)}{\varepsilon}. 
\end{align*}
Taking the limit $\varepsilon\rightarrow0$, we obtain the differential equation, 
\begin{align*}
\nu'(x)=\nu'(1)\frac{\nu(x)}{x}, \qquad \nu(1)=1, 
\end{align*}
and the solution is given as $\nu(x)=x^{\nu'(1)}$. 
\end{proof}
As shown in Example \ref{example:power-div}, the function $\nu(z)=z^\gamma$ is derived 
from the power potential $V(z)=(1-z^\gamma)/\gamma$. 
In robust statistics, the power potential has been applied in wide-rage of data analysis
\cite{basu98:_robus_and_effic_estim_by,minami02:_robus_blind_sourc_separ_by_beta_diver}. 

\begin{remark}
 Ohara and Eguchi \cite{ohara05:_geomet_posit_defin_matric_and} have studied 
 the differential geometrical structure over $\mathrm{PD}(n)$ induced by the $V$-Bregman
 divergence. 
 They pointed out that the geometrical structure is invariant under $\mathrm{SL}(n)$ group
 action. Furthermore, they have showed that 
 for the power potential $V(z)=(1-z^\gamma)/\gamma$, 
 the $\theta_V$- ($\eta$-) projection onto $\eta$- ($\theta_V$-) autoparallel submanifold 
 is invariant under $\mathrm{GL}(n)$ group action. 
 It turns out that only the orthogonality is kept unchanged under the group action. 
 The other geometrical features such as angle between two tangent vectors are not
 preserved in general. 
 Theorem \ref{theorem:invariance-determinant} indicates that
 the invariance of the geometrical structure on $\mathrm{PD}(n)$ is inherited 
 to the invariance of point sequences of quasi-Newton methods under the affine
 transformation. 
\end{remark}

In summary, we obtain the following results. 
Suppose that $\widetilde{x}_0=Tx_0,\, \widetilde{B}_0=(T^\top)^{-1}B_0T^{-1}$ holds. 
Let $\{x_k\}$ and $\{\widetilde{x}_k\}$ be point sequences generated by the $V$-BFGS
method for the functions $f(x)$ and $\widetilde{f}(\widetilde{x})$, respectively. 
Suppose that the line search with the same stopping rule is used for the step length. 
Then, for any $T\in\mathrm{SL}(n)$ the equality $\widetilde{x}_k=Tx_k$ holds for all
$k\geq 1$. 
Moreover the equality $\widetilde{x}_k=Tx_k,\,k\geq 1$ holds for any $T\in\mathrm{GL}(n)$
if and only if the function $V(z)$ is the power potential. 


\section{Geometry of Sparse quasi-Newton updates}
\label{sec:Sparse-V-quasi-Newton}
Sparse quasi-Newton method exploits the sparsity of Hessian matrix 
in order to reduce the computation cost
\cite{yamashita08:_spars_quasi_newton_updat_with}. 
The sparsity pattern of the Hessian matrix at a point $x\in\Real^n$ 
is represented by an index set $F$ satisfying 
\begin{align*}
 \{(i,j)~|~(\nabla^2 f(x))_{ij}\neq 0\}\subset F. 
\end{align*}
When the number of entries in $F$ is small, the matrix $\nabla^2 f(x)$ is referred to as 
sparse matrix. We assume that $(j,i)\in F$ holds for $(i,j)\in F$ and that $(i,i)\in F$
for all $i=1,\ldots,n$. 
Given a sparsity pattern $F$, the set of sparse matrix is defined by
\begin{align*}
 \mathcal{S}=\{P\in\mathrm{PD}(n)~|~P_{ij}=0\  \text{for}\  (i,j)\not \in F\}. 
\end{align*}
Clearly the submanifold $\mathcal{S}$ is $\eta$-autoparallel in $\mathrm{PD}(n)$. 

Yamashita \cite{yamashita08:_spars_quasi_newton_updat_with} has proposed a sparse
quasi-Newton method. In this section we show an extension of sparse quasi-Newton method 
and illustrate a geometrical structure of the update formula. 
First, we briefly introduce the sparse quasi-Newton method proposed by Yamashita 
\cite{yamashita08:_spars_quasi_newton_updat_with}. 
Suppose $H_k$ be an approximate inverse Hessian matrix at the $k$-th step of the sparse
quasi-Newton method. 
Let $H_k^{\mathrm{QN}}$ be the updated matrix of $H_k$ by the existing quasi-Newton
methods such as the BFGS or the DFP method for the approximate inverse Hessian matrix. 
In the computation of $H_k^{\mathrm{QN}}$, we need only the elements
$(H_k^{\mathrm{QN}})_{ij}$ for $(i,j)\in F$, and thus efficient computation will be
possible even if the size of the matrix is large. 
Then, compute the sparse matrix $H_{k+1}\in\mathcal{S}$ satisfying 
the constraint $(H_{k+1})_{ij}=(H_k^{\mathrm{QN}})_{ij}$ for all $(i,j)\in{F}$. 
The calculation of $H_{k+1}$ from $H_k^{\mathrm{QN}}$ is regarded as the
$\theta_V$-projection with respect to the KL-divergence. 
The sparse clique-factorization technique 
\cite{fukuda00:_exploit_spars_in_semid_progr,grone84:_posit_defin_compl_of_partial_hermit_matric}
is available for the practical computation of the projection. 
See \cite{yamashita08:_spars_quasi_newton_updat_with} for details.

For the computation of both $H^{\mathrm{QN}}_{k+1}$ and $H_{k+1}$ in the sparse
quasi-Newton method, we can use Bregman divergence instead of the KL-divergence. 
Figure \ref{fig:Extendedsparse_quasi-Newton} shows an extended sparse quasi-Newton method
for the approximate Hessian matrix $B_k$. 
Figure \ref{fig:geometry_Extendedsparse_quasi-Newton} illustrates the geometrical
interpretation of the extended sparse quasi-Newton updates. 

We have some choices in the algorithm of Figure \ref{fig:Extendedsparse_quasi-Newton}: 
(i) the Bregman divergence in Step 2, (ii) projection in Step 3, and (iii) the number of $T$. 
In the sparse quasi-Newton updates presented by Yamashita
\cite{yamashita08:_spars_quasi_newton_updat_with} ,
the number of iteration is set to $T=1$; 
in Step 2, the standard BFGS/DFP method for the approximate inverse Hessian is used; in
Step 3 the $\theta_V$-projection defined from the KL-divergence is computed. 
Moreover, the superlinear convergence has been proved, see
\cite{yamashita08:_spars_quasi_newton_updat_with} for details. 
In the following, we present the geometrical interpretation of the sparse quasi-Newton
method. 
Then we show a computation algorithm for the update formula derived from the $V$-Bregman
divergence. 

\begin{figure}[t]
 \begin{center}
 \fbox{
  \begin{minipage}{0.9\linewidth}
\begin{description}
 \item[Extended sparse quasi-Newton update algorithm:]
	    the Hessian approximation $B_k$ at the $k$-th step of quasi-Newton method is
	    updated to a sparse matrix $B_{k+1}$. 
	    Suppose that the Bregman divergence is defined from the potential function
	    $\varphi$, and let $\mathcal{S}$ be the set of sparse matrix defined by 
	    a fixed index set $F$. 
 \item[Initialization:] 
	    Let $T$ be an positive integer, and $B^{(0)}:=B_k$. 
 \item[Repeat:] $t=1,2,\ldots,T$. 
	    \begin{enumerate}
	     \item Compute the partial matrix $\bar{B}^{(t-1)}_{ij}$ for $(i,j)\in F$ from
		   $B^{(t-1)}$ by using the extended quasi-Newton method such as 
		   \eqref{eqn:Bregman-BFGS-hessian-update-prob} 
		   or \eqref{eqn:Bregman-DFP-hessian-update-prob}. 
	     \item Compute the sparse matrix $B^{(t)}\in\mathcal{S}$ which is the
		   $\theta_\varphi$-projection of $\bar{B}^{(t-1)}$ onto $\mathcal{S}$. 
	    \end{enumerate}
 \item[Output:] The updated approximate Hessian matrix $B_{k+1}$ is given as
	    $B^{(T)}\in\mathcal{S}$. 
\end{description}
\end{minipage}}
 \caption{An extension of sparse quasi-Newton method is presented. 
  The approximate Hessian $B_k$ is updated to $B_{k+1}$ by exploiting the update formula
  with Bregman divergences.}
\label{fig:Extendedsparse_quasi-Newton}
\end{center}
\vspace*{3mm}
\end{figure}

\begin{figure}[t]
\begin{center}
 \includegraphics[scale=0.5]{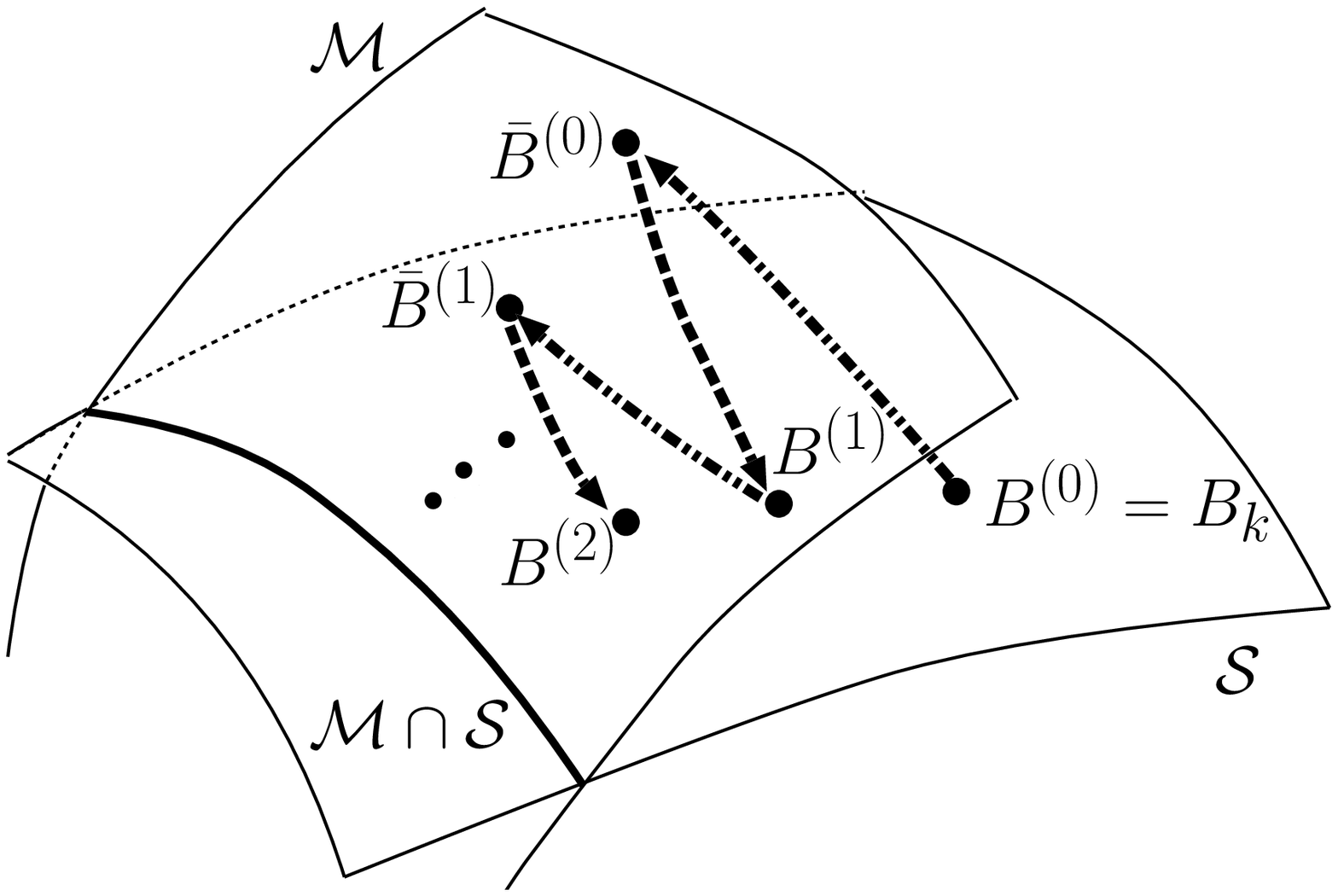}
 \caption{Geometrical illustration of the extended sparse quasi-Newton update algorithm.}
 \label{fig:geometry_Extendedsparse_quasi-Newton} 
 \end{center}
\end{figure}

\subsection{Geometry of Sparse quasi-Newton update}
\label{subsec:Geometry_Sparse-quasi-Newton}
We consider the sparse quasi-Newton update formula from the geometrical viewpoint. 
Remember that $\mathcal{M}$ is the set of matrices satisfying the secant condition 
\begin{align*}
\mathcal{M}=\{B\in\mathrm{PD}(n)~|~Bs_k=y_k\}. 
\end{align*}
Below we consider two kinds of update formulae:
\begin{description}
 \item[Algorithm 1: ] 
	    In the algorithm in Figure \ref{fig:Extendedsparse_quasi-Newton}, 
	    the matrix $\bar{B}^{(t)}$ is defined as the $\eta$-projection of $B^{(t)}$ onto 
	    $\mathcal{M}$, that is, $\bar{B}^{(t)}$ is equal to $B^{DFP}[B^{(t)};s_k,t_k]$. 
	    Then $B^{(t+1)}$ is defined as the $\theta_\varphi$-projection of
	    $\bar{B}^{(t)}$ onto $\mathcal{S}$. 
 \item[Algorithm 2: ] 
	    In the algorithm in Figure \ref{fig:Extendedsparse_quasi-Newton}, 
	    the matrix $\bar{B}^{(t)}$ is the $\theta_\varphi$-projection of $B^{(t)}$
	    onto $\mathcal{M}$, that is, $\bar{B}^{(t)}$ is given as the optimal solution
	    of \eqref{eqn:Bregman-BFGS-hessian-update-prob}. 
	    Then $B^{(t+1)}$ is defined as the $\theta_\varphi$-projection of
	    $\bar{B}^{(t)}$ onto $\mathcal{S}$. 
\end{description}
The difference between Algorithm 1 and Algorithm 2 is the projection onto $\mathcal{M}$ to
obtain $\bar{B}^{(t)}$. Below we show the theoretical properties for each algorithm. 

In Algorithm 1, we consider how the Bregman divergence $D_\varphi(B^{(t)},\bar{B}^{(t)})$
is updated. Let $B^{(0)}=B_k\in\mathcal{S}$ and suppose that the
$\theta_\varphi$-projection onto $\mathcal{S}$ exists. Then, the extended Pythagorean
theorem in Section \ref{subsec:Extended_Pythagorean_Theorem} leads that 
\begin{align*}
 D_\varphi(B^{(t)},\bar{B}^{(t)}) 
 &=
 D_\varphi(B^{(t)},B^{(t+1)})+D_\varphi(B^{(t+1)},\bar{B}^{(t)})\\
 &=
 D_\varphi(B^{(t)},B^{(t+1)})
 +D_\varphi(B^{(t+1)},\bar{B}^{(t+1)})
 +D_\varphi(\bar{B}^{(t+1)},\bar{B}^{(t)})\\
 &\geq D_\varphi(B^{(t+1)},\bar{B}^{(t+1)})
\end{align*}
and hence we have
\begin{align*}
 D_\varphi(B^{(0)},\bar{B}^{(0)})  \geq  D_\varphi(B^{(1)},\bar{B}^{(1)}) 
 \geq  \cdots \geq D_\varphi(B^{(T)},\bar{B}^{(T)}). 
\end{align*}
This indicates that under a mild assumption 
the Bregman divergence $D_\varphi(B^{(t)},\bar{B}^{(t)})$ will converge to 
zero and that $B^{(t)}\in\mathcal{S}$ will also converge to a matrix in
$\mathcal{M}\cap\mathcal{S}$. A condition on the convergence has been investigated by
Bauschke, et al. \cite{bauschke02:_iterat_bregm_retrac}. 
This update algorithm is similar to the so-called
em-algorithm \cite{Amari95,dempster77:_maxim_likel_from_incom_data} which is a popular
algorithm in statistics and machine learning. In the em-algorithm, the $\eta$-projection
and the $\theta_V$-projection with $V(z)=-\log z$ is repeated in the probability space. 
Then, the maximum likelihood estimator under the partial observation is computed. 
In the context of statistical estimation, 
usually the em-algorithm is conducted when $\mathcal{M}\cap\mathcal{S}=\emptyset$ holds. 
Under some assumption with $\mathcal{M}\cap\mathcal{S}=\emptyset$, the point sequences 
$(B^{(t)},\bar{B}^{(t)})\in\mathcal{S}\times\mathcal{M}$ converges to the pair of 
the closest point $(B^{\ast},\bar{B}^{\ast})\in\mathcal{S}\times\mathcal{M}$ such that 
$(B^{\ast},\bar{B}^{\ast})$ is the optimal solution of the optimization problem,  
\begin{align*}
\min_{(B,\bar{B})\in\mathcal{S}\times\mathcal{M}}D_\varphi(B,\bar{B}), 
\end{align*}
see \cite{mclachlan08:_em_algor_and_exteny} for details. 
We believe that to provide a simple characterization about the convergence point
$(B^{\ast},\bar{B}^{\ast})$  under the condition
$\mathcal{M}\cap\mathcal{S}\neq\emptyset$ is an open problem. 

Next, we investigate Algorithm 2. Likewise we suppose $B_k=B^{(0)}\in\mathcal{S}$. Note
that $\mathcal{M}\cap\mathcal{S}$ is $\eta$-autoparallel. Let $B^{\star}$ be the
$\theta_\varphi$-projection of $B_k=B^{(0)}$ onto the intersection
$\mathcal{M}\cap\mathcal{S}$. Then the extended Pythagorean theorem leads that 
\begin{align*}
 D_\varphi(B^{\star},B^{(t)})
 &= 
 D_\varphi(B^{\star},\bar{B}^{(t)})+D_\varphi(\bar{B}^{(t)},B^{(t)})\\
 &=
 D_\varphi(B^{\star},B^{(t+1)})+D_\varphi(B^{(t+1)},\bar{B}^{(t)})
 +D_\varphi(\bar{B}^{(t)},B^{(t)})\\
 &\geq D_\varphi(B^{\star},B^{(t+1)})
\end{align*}
and hence we have
\begin{align*}
 D_\varphi(B^{\star},B^{(0)})\geq 
 D_\varphi(B^{\star},B^{(1)})\geq \cdots\geq
 D_\varphi(B^{\star},B^{(T)}). 
\end{align*}
Suppose that $B^{(T)}$ converges to $B^{(\infty)}\in\mathcal{M}\cap\mathcal{S}$ when $T$
tends to infinity, then the equality $B^{(\infty)}=B^{\star}$ holds as shown below. 
From the definition of $B^{\star}$ and the extended Pythagorean theorem, we have 
\begin{align*}
 D_{\varphi}(B^{(\infty)},B^{(T)})
 =D_{\varphi}(B^{(\infty)},B^*)+D_{\varphi}(B^*,B^{(T)}). 
\end{align*}
Due to the continuity of the Bregman divergence, for $T\rightarrow\infty$ we have 
\begin{align*}
 0=D_{\varphi}(B^{(\infty)},B^{(\infty)})
 =D_{\varphi}(B^{(\infty)},B^*)+D_{\varphi}(B^*,B^{(\infty)}), 
\end{align*}
and hence $B^{(\infty)}=B^{\star}$ holds. As the result we have
$\lim_{T\rightarrow\infty}B^{(T)}=B^{\star}$.  
Figure \ref{fig:geometry_sparse_quasi-Newton-boosting} shows the geometrical
illustration of the Algorithm 2. 
Applying Theorem 8.1 of Bauschke and Borwein \cite{bauschke97:_legen_funct_and_method_of},
we see that the convergence of $B^{(T)}$ to the point $B^{\star}$ is guaranteed under the
Bregman divergence associated with power potential with $\gamma\leq 0$. 
The iterative update procedure is closely related to the boosting algorithm
\cite{FreundSchapire97,murata04:_infor_geomet_u_boost_bregm_diver} in which the iterative
Bregman projection is exploited to compute the estimator for classification problems. 

\begin{figure}[t]
 \begin{center}
 \includegraphics[scale=0.5]{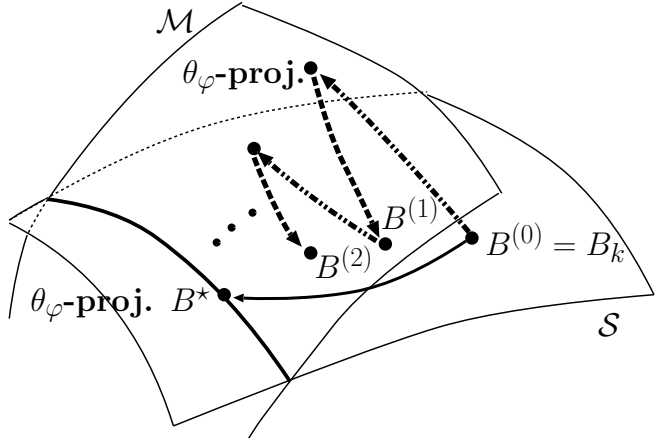}
 \end{center}
 \caption{Geometrical interpretation of Algorithm 2. 
 The sparse matrix $B^{(t)}$ will converge to $B^\star$
 which is the $\theta_\varphi$ projection of $B^{(0)}=B_k\in\mathcal{S}$.}
 \label{fig:geometry_sparse_quasi-Newton-boosting} 
\end{figure}

As argued above, it is not guaranteed that $B^{(t)}$ in Algorithm 1 converges to
$B^\star$, which is the $\theta_\varphi$-projection of $B_k=B^{(0)}$ onto
$\mathcal{M}\cap\mathcal{S}$. 
On the other hand the sequence $B^{(t)}$ in Algorithm 2 converges to $B^\star$ under mild
assumption. 
From the viewpoint of the least-change principle, 
the sparse quasi-Newton method with Algorithm 2 will be preferable. 
Fletcher \cite{fletcher95:_optim_posit_defin_updat_for} has proposed the sparse update
formula using $B^{\star}$. 
The update formula using the matrix $B^{\star}$ requires the sparsity and the secant
condition simultaneously, and hence, the approximate Hessian can be ill-posed when
$(s_k)_i=0$ for some $i$ \cite{sorensen82:_collin_scalin_and_sequen_estim}. 


\subsection{Computation of Projections}
\label{subsec:Computation_Projections}
We consider the computation of the extended sparse quasi-Newton updates. 
In Algorithm 1 and 2 above, we need to compute the $\theta_{\varphi}$-projection of a
matrix $B$ onto the $\eta$-autoparallel submanifold $\mathcal{S}$ consisting of sparse
positive definite matrices. 
Generally the $\theta_\varphi$-projection does not have the explicit expression. 
Here, we study only the $\theta_V$-projection based on the $V$-Bregman divergence. 

According to Yamashita \cite{yamashita08:_spars_quasi_newton_updat_with}, we briefly
introduce the computation of the projection onto $\mathcal{S}$, when the geometrical
structure is induced from the KL-divergence. 
For a given matrix $\bar{B}^{(t)}\in\mathcal{M}$, 
the projection onto $\mathcal{S}$, denoted as $B^{(t+1)}$, is obtained as the optimal solution of
\begin{align*}
 \min_{B\in\mathcal{\PD}(n)}\KL(B,\bar{B}^{(t)}),\qquad \st B\in\mathcal{S}. 
\end{align*}
Some calculation yields that $B^{(t+1)}$ is also the optimal solution of  
\begin{align*}
 \max_{B\in\mathcal{\PD}(n)}\det{B^{-1}},\qquad 
 \st (B^{-1})_{ij}=(H^{(t)})_{ij}\ \ (i,j)\in F. 
\end{align*}
Let $\bar{F}$ be $\bar{F}=F\backslash\{(i,i)~|~i=1,\ldots,n\}$. If the graph
$G=(\{1,\ldots,n\},\bar{F})$ is chordal, the existence of the optimal 
solution is guaranteed \cite{yamashita08:_spars_quasi_newton_updat_with,
fukuda00:_exploit_spars_in_semid_progr,
grone84:_posit_defin_compl_of_partial_hermit_matric}. 
The inverse of the optimal solution, $(B^{(t+1)})^{-1}$, is represented by using 
the sparse clique-factorization formula 
\cite{fukuda00:_exploit_spars_in_semid_progr,yamashita08:_spars_quasi_newton_updat_with},
and then the updated inverse Hessian matrix is obtained. 
The sparse clique-factorization formula of $(B^{(t+1)})^{-1}$ is represented by 
\begin{align*}
 (B^{(t+1)})^{-1}=
 L_1^\top L_2^\top \cdots L_{\ell-1}^\top  D L_{\ell-1} \cdots L_2 L_1
\end{align*}
in which $L_r\;(r=1,\ldots,\ell-1)$ are lower triangular matrices, and $D$ is 
a positive definite block-diagonal matrix consisting of $\ell$ diagonal blocks. 
The number of $\ell$ is determined by the the number of maximal cliques of the graph
$G=(\{1,\ldots,n\},\bar{F})$, and all elements of $L_r\;(r=1,\ldots,\ell-1)$ and $D$ are
explicitly computed from $(H^{(t)})_{ij},\,(i,j)\in F$. 
We generalize the above argument to the projection with the $V$-Bregman divergence. 
\begin{theorem}
 \label{theorem:min-det-B-chordal}
 Let $\bar{F}$ be $\bar{F}=F\backslash\{(i,i)~|~i=1,\ldots,n\}$, 
 and suppose that the undirected graph $(\{1,\ldots,n\},\bar{F})$ is chordal. 
 Let $\bar{B}^{(t)}\in\mathcal{M}$. Then there exists the $\theta_V$-projection 
 of $\bar{B}^{(t)}$ onto $\mathcal{S}$, and the projection is the optimal solution of the
 following problem, 
 \begin{align}
  \label{eqn:simplified-sparse-V-DFP}
  \begin{array}{l}
   \displaystyle
    \min_{B\in\mathrm{PD}(n)}\ \det(B),\quad
    \subto\  
    (\theta_V(B))_{ij}=(\theta_V(\bar{B}^{(t)}))_{ij},\  (i,j)\in F. 
  \end{array}
 \end{align}
\end{theorem}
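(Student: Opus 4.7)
The plan is to reduce the problem to the classical maximum-determinant positive-definite matrix completion on the chordal sparsity pattern $(\{1,\ldots,n\},\bar{F})$, whose solution theory is already cited around the theorem statement.

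First, by Theorem~\ref{theorem:Extended_Pythagorean_Theorem} applied to the $\eta$-autoparallel submanifold $\mathcal{S}$, a matrix $B^{\ast}\in\mathcal{S}$ is the $\theta_V$-projection of $\bar{B}^{(t)}$ onto $\mathcal{S}$ iff
\[
 \bigl(\theta_V(\bar{B}^{(t)})-\theta_V(B^{\ast})\bigr)_{ij}=0,\qquad (i,j)\in F.
\]
This follows by plugging the tangent directions of $\mathcal{S}$ (symmetric matrices supported on $F$) into the orthogonality relation and using the symmetry of $\theta_V(B)=-\nu(\det{B})B^{-1}$. Hence it suffices to exhibit a minimizer of \eqref{eqn:simplified-sparse-V-DFP} that also lies in $\mathcal{S}$.

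Next I would introduce the substitution $A:=-\theta_V(B)=\nu(\det{B})\,B^{-1}$. Taking determinants gives $\det{A}=\nu(\det{B})^n/\det{B}$, so writing $g(d):=\nu(d)^n/d$ a direct computation yields
\[
 g'(d)=\frac{\nu(d)^n}{d^2}\bigl(n\beta(d)-1\bigr),
\]
which is strictly negative by \eqref{eqn:beta-condition}. Together with the boundary condition \eqref{eqn:nu-limit-condition} and $\nu>0$, this makes $g$ a strictly decreasing continuous bijection between appropriate intervals of $\Real_{+}$, so $\det{B}$ is uniquely recovered from $\det{A}$ via $\det{B}=g^{-1}(\det{A})$, after which $B=\nu(\det{B})\,A^{-1}$. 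The correspondence $B\leftrightarrow A$ is therefore a bijection of $\PD(n)$ onto its image, and because $g^{-1}$ is strictly decreasing, minimizing $\det{B}$ in \eqref{eqn:simplified-sparse-V-DFP} is equivalent to maximizing $\det{A}$ subject to $A_{ij}=A^{0}_{ij}$ on $F$, where $A^{0}:=-\theta_V(\bar{B}^{(t)})=\nu(\det{\bar{B}^{(t)}})(\bar{B}^{(t)})^{-1}$.

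The transformed problem is the maximum-determinant completion of the partial matrix $(A^{0}_{ij})_{(i,j)\in F}$, which is feasible because $A^{0}$ itself is a positive-definite completion. The chordality of $(\{1,\ldots,n\},\bar{F})$ together with \cite{grone84:_posit_defin_compl_of_partial_hermit_matric} (and the clique-factorization formula already cited) gives a unique optimum $A^{\ast}\in\PD(n)$ whose inverse $(A^{\ast})^{-1}$ has sparsity pattern $F$, i.e.\ $(A^{\ast})^{-1}\in\mathcal{S}$. Translating back via $B^{\ast}=\nu(\det{B^{\ast}})(A^{\ast})^{-1}$ and using that $\mathcal{S}$ is closed under positive scaling shows $B^{\ast}\in\mathcal{S}$; by construction $B^{\ast}$ also satisfies the orthogonality condition of the first paragraph, so $B^{\ast}$ is the (existing, unique) $\theta_V$-projection of $\bar{B}^{(t)}$ onto $\mathcal{S}$ and solves \eqref{eqn:simplified-sparse-V-DFP}. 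The main obstacle I anticipate is ensuring that the range of $g$ is large enough for the change of variable to cover the entire feasible set globally; once that is secured, the remainder is a direct invocation of the chordal max-det completion theory and the extended Pythagorean theorem.
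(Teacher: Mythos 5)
Your proposal follows essentially the same route as the paper's proof: the substitution $A=-\theta_V(B)=\nu(\det B)B^{-1}$, the strict monotonicity of $\nu(d)^n/d$ via $\beta<1/n$ to convert the min-$\det$ problem into a max-determinant chordal completion, the citation of the positive definite completion results to get $(A^{\ast})^{-1}\in\mathcal{S}$, and the orthogonality/Pythagorean identity on $F$ versus $F^{c}$ to identify the result as the $\theta_V$-projection. The surjectivity issue you flag at the end (whether $g(d)=\nu(d)^n/d$ attains the value $\det A^{\ast}$) is likewise left implicit in the paper, which simply asserts that $-\theta_V$ is a bijection on $\mathrm{PD}(n)$.
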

\begin{proof}
Remember that $\theta_V(P)$ is defined as 
$\theta_V(P)=-\nu(\det{P})P^{-1}$ which is a negative definite matrix. 
It is easy to see that the mapping $-\theta_V(P)$ is bijection on $\PD(n)$. 
Hence, the assumption on the graph $(\{1,\ldots,n\},\bar{F})$ guarantees that 
 the problem
 \begin{align}
 \max_{B\in\PD(n)} \det(-\theta_V(B)),\qquad
 (\theta_V(B))_{ij}=(\theta_V(\bar{B}^{(t)}))_{ij}\;\; \text{for all}\;\; (i,j)\in F 
 \label{eqn:min-prob_det_theta_V}
 \end{align}
has the unique optimal solution $B^*$, 
and the optimal solution satisfies $(-\theta_V(B^*))^{-1}\in\mathcal{S}$, 
as shown in 
\cite{grone84:_posit_defin_compl_of_partial_hermit_matric,
fukuda00:_exploit_spars_in_semid_progr,yamashita08:_spars_quasi_newton_updat_with}. 
In terms of the objective function, we see that 
\begin{align*}
 \det(-\theta_V(B))= \det(\nu(\det{B})B^{-1})=\frac{\nu(\det{B})^n}{\det{B}}. 
\end{align*}
 The function $\nu(z)^n/z$ is strictly monotone decreasing for $z>0$. Indeed, 
\begin{align*}
 \frac{d}{dz}\log\frac{\nu(z)^n}{z}=\frac{n}{z}\bigg(\beta(z)-\frac{1}{n}\bigg)<0
\end{align*}
holds. Thus, the optimal solution of \eqref{eqn:min-prob_det_theta_V} is identical to that
of \eqref{eqn:simplified-sparse-V-DFP}. 
We find that $B^*\in\mathcal{S}$ holds, since
 $(-\theta_V(B^*))^{-1}=\nu(\det{B^*})^{-1}B^*\in\mathcal{S}$ holds. 
For any $B\in\mathcal{S}$, we have 
\begin{align*}
D_V(B,\bar{B}^{(t)})-D_V(B,B^*)- D_V(B^*,\bar{B}^{(t)})
 &=\sum_{i,j}(\theta_V(\bar{B}^{(t)})-\theta_V(B^*))_{ij}(B^*-B)_{ij}\\
 &=\sum_{(i,j)\not\in F}(\theta_V(\bar{B}^{(t)})-\theta_V(B^*))_{ij}(B^*-B)_{ij}\\
 &=0. 
\end{align*}
The second and third equalities follows 
$(\theta_V(\bar{B}^{(t)})-\theta_V(B^*))_{ij}=0$ for $(i,j)\in F$ and
$(B^*-B)_{ij}=0$ for $(i,j)\not\in F$, respectively. 
Therefore, $B^*$ is identical to the $\theta_V$-projection of $\bar{B}_t$ onto
$\mathcal{S}$. 
\end{proof}

We present a practical method of computing the projection of $\bar{B}^{(t)}$ onto
$\mathcal{S}$. Let $B^{(t)}$ and $\bar{B}^{(t)}$ for $t=0,1,2,\ldots$ be 
matrices generated by the extended sparse quasi-Newton update with Algorithm 2. 
We show a method of computing $H^{(t)}=(B^{(t)})^{-1}$ and
$\bar{H}^{(t)}=(\bar{B}^{(t)})^{-1}$. 
Suppose we have $H^{(t)}$, then $\bar{H}^{(t)}$ is obtained by solving the problem
\begin{align*}
 \min_{H\in\PD(n)} D_V(H^{-1},(H^{(t)})^{-1}),\qquad Hy_k=s_k. 
\end{align*}
In the similar way of the proof of Theorem \ref{theorem:V-BFGS-form}, the optimal solution
$\bar{H}^{(t)}$ satisfies 
\begin{align*}
\bar{H}^{(t)}=\frac{\nu(\det({\bar{H}^{(t)}})^{-1})}{\nu(\det({H^{(t)}})^{-1})}
 B^{\mathrm{DFP}}[H^{(t)};y_k,s_k]+
 \bigg(1-\frac{\nu(\det({\bar{H}^{(t)}})^{-1})}{\nu(\det({H^{(t)}})^{-1})}\bigg)
 \frac{s_ks_k^\top}{s_k^\top y_k}. 
\end{align*}
We need only the elements $(\bar{H}^{(t)})_{ij}$ for $(i,j)\in F$ and the determinant
$\det(\bar{H}^{(t)})$. 
If we have the Choleskey factorization or the sparse clique-factorization formula of
$H^{(t)}$, we can obtain these values by simple computation. 
Then, the matrix $H^{(t+1)}$ is given as the optimal solution of 
\begin{align*}
 \min_{H\in\PD(n)} D_V(H^{-1},(\bar{H}^{(t)})^{-1}),\qquad H^{-1}\in\mathcal{S}. 
\end{align*}
As shown in the proof of Theorem \ref{theorem:min-det-B-chordal}, 
$H^{(t+1)}$ is also the optimal solution of 
\begin{align*}
 \max_{H\in\PD(n)} \det(-\theta_V(H^{-1})),\qquad
 \theta_V(H^{-1})_{ij}=\theta_V((\bar{H}^{(t)})^{-1})_{ij}
 \;\; \text{for all}\;\; (i,j)\in F 
\end{align*}
Let $X=-\theta_V((H^{(t+1)})^{-1})=\nu(\det(H^{(t+1)})^{-1})H^{(t+1)}$, then 
the sparse clique-factorization formula provides the factorized expression of $X$ 
based on the information of 
$\nu(\det(\bar{H}^{(t)})^{-1})\bar{H}^{(t)}_{ij},\,(i,j)\in F$. 
The determinant of $X$ is easily computed by the sparse clique-factorization formula. 
Then, we solve the the following equation, 
\begin{align*}
 \det{X}=\frac{\nu(z)^n}{z},\quad z>0. 
\end{align*}
The Newton method is available to find the unique solution $z^*$ efficiently. 
Using the solution $z^*$, the matrix $H^{(t+1)}$ is represented 
\begin{align*}
 H^{(t+1)}=\frac{1}{\nu(z^*)}X. 
\end{align*}
The matrix $H^{(t+1)}$ also has the expression of the sparse clique-factorization
formula, and thus, it is available to the sequel computation. 
%

\section{Concluding Remarks}
\label{sec:Concluding_Remarks}
Along the line of the research stared by Fletcher
\cite{fletcher91:_new_resul_for_quasi_newton_formul}, we considered the quasi-Newton
update formula based on the Bregman divergences, and presented a geometrical
interpretation of the Hessian update formulae. We studied the invariance property of the
update formulae. The sparse quasi-Newton methods were also considered based on the
information geometry. We show that the information geometry is useful tool 
not only to better understand the quasi-Newton methods but also 
to design new update formulae. 

As pointed out in Section \ref{sec:potential-function_quasi-Newton}, 
the self-scaling quasi-Newton method with the popular scaling parameter 
is out of the formulae derived from the Bregman divergence. 
Nocedal and Yuan proved that the self-scaling quasi-Newton method with the popular scaling
parameter has some drawbacks \cite{nocedal93:_analy_of_self_scalin_quasi_newton_method}. 
An interesting future work is to pursue the relation between the numerical properties and
the geometrical structure behind the optimization algorithms. 
In the study of the interior point methods, it has been made clear that geometrical
viewpoint is useful \cite{ohara10:_infor_geomet_approac_to_polyn}. 
The geometrical viewpoint will become important to investigate algorithms for numerical
computation.

\section{Acknowledgements}
The authors are grateful to Dr.~Nobuo Yamashita of Kyoto university for helpful comments. 
T.~Kanamori was partially supported by Grant-in-Aid for Young Scientists
(20700251).

\appendix 
\section{Proof of Theorems \ref{theorem:V-BFGS-form}}
\label{appendix:VBFGS-formula}
We prove the following lemma which is useful to show the existence of the optimal
solution. 
\begin{lemma}
 \label{lemma:sol_existence-nu-equation}
 Let $V$ be a potential and $\nu=\nu_V$. 
 For any $C>0$ the equation
 \begin{align}
  \label{eqn:lemma-equation-any-C}
 C\nu(z)^{n-1}=z,\quad z>0
 \end{align}
has the unique solution. 
\end{lemma}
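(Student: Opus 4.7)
The plan is to study the function $g(z) := z/\nu(z)^{n-1}$ on $(0,\infty)$ and show that it is a strictly increasing continuous bijection onto $(0,\infty)$; then the equation $C\nu(z)^{n-1}=z$ is equivalent to $g(z)=C$, which has a unique positive solution for every $C>0$.

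First I would verify strict monotonicity by computing the logarithmic derivative. Since $\nu(z)>0$ and is differentiable on $\Real_{+}$, we have
\begin{align*}
 \frac{d}{dz}\log g(z)
 = \frac{1}{z} - (n-1)\frac{\nu'(z)}{\nu(z)}
 = \frac{1}{z}\bigl(1-(n-1)\beta(z)\bigr).
\end{align*}
By the assumption \eqref{eqn:beta-condition}, $\beta(z)<1/n$ for every $z>0$, hence $1-(n-1)\beta(z)>1-(n-1)/n=1/n>0$. Therefore $(\log g)'>0$ on $\Real_{+}$, so $g$ is strictly increasing and, in particular, injective.

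Next I would pin down the two limits of $g$. The limit $\lim_{z\to+0}g(z)=0$ is exactly the hypothesis \eqref{eqn:nu-limit-condition}. For $\lim_{z\to\infty}g(z)=\infty$, I would integrate the lower bound $(\log g)'(z)>1/(nz)$ obtained above from $1$ to $z$, yielding $\log g(z)\ge \log g(1)+\tfrac{1}{n}\log z$, which tends to $+\infty$ as $z\to\infty$. Since $\nu$ (hence $g$) is continuous on $\Real_{+}$, the intermediate value theorem implies that $g$ maps $(0,\infty)$ onto $(0,\infty)$.

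Combining strict monotonicity with surjectivity onto $(0,\infty)$, for any $C>0$ there is exactly one $z>0$ with $g(z)=C$, i.e.\ exactly one solution to \eqref{eqn:lemma-equation-any-C}. The only subtlety is verifying the divergence $g(z)\to\infty$ as $z\to\infty$, since the condition \eqref{eqn:beta-condition} gives only a strict but not a uniform bound on $\beta$; the argument above sidesteps this by using the uniform lower bound $1-(n-1)\beta(z)\ge 1/n$ coming directly from \eqref{eqn:beta-condition}, which is the one place requiring care.
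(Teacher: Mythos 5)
Your proposal is correct and follows essentially the same route as the paper: the paper works with $\zeta(z)=\log z-(n-1)\log\nu(z)$, which is exactly your $\log g(z)$, derives the same uniform lower bound $\zeta'(z)>1/(nz)$ from \eqref{eqn:beta-condition}, uses \eqref{eqn:nu-limit-condition} for the limit at $0^{+}$, and integrates the derivative bound from $1$ to $z$ to get divergence at infinity. No gaps.
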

\begin{proof}
 We define the function $\zeta(z)$ by $\zeta(z)=\log z-(n-1)\log \nu(z)$, 
 then, the \eqref{eqn:lemma-equation-any-C} is equivalent to the equation
 \begin{align}
  \label{eqn:expression2-equation-any-C}
  \log C=\zeta(z),\quad z>0. 
 \end{align}
 Since the potential function satisfies $\lim_{z\rightarrow+0}z/\nu(z)^{n-1}=0$ from the
 definition, we have $\lim_{z\rightarrow+0}\zeta(z)=-\infty$. 
 In terms of the derivative of $\zeta(z)$, we have the following inequality
 \begin{align*}
  \frac{d}{dz}\zeta(z)=\frac{1}{z}-(n-1)\frac{\beta(z)}{z}>\frac{1}{zn}>0. 
 \end{align*}
 Thus, $\zeta(z)$ is an increasing function on $\Real_+$. 
 Moreover we have
\begin{align*}
 \zeta(z)\geq \zeta(1)+\int_1^z\frac{1}{zn}dz=
 \zeta(1)+\frac{\log z}{n}. 
\end{align*}
 The above inequality implies that $\lim_{z\rightarrow\infty}\zeta(z)=\infty$. 
Since $\zeta(z)$ is continuous, the equation \eqref{eqn:expression2-equation-any-C} has
the unique solution. 
\end{proof}

\begin{proof}
[Proof of Theorem \ref{theorem:V-BFGS-form}]
First, we show the existence of the matrix $B_{k+1}$ 
satisfying \eqref{eqn:update-formula-V-BFGS}. Lemma \ref{lemma:sol_existence-nu-equation} now shows 
that there exists a solution $z^*>0$ for the equation 
\begin{align*}
 \frac{\det(B^{BFGS}[B_k;s_k,y_k])}{\nu(\det{B_k})^{n-1}}\cdot\nu(z)^{n-1}=z,\quad z>0. 
\end{align*}
By using the solution $z^*$, we define the matrix $\bar{B}$ such that
\begin{align*}
 \bar{B}=\frac{\nu(z^*)}{\nu(\det{B_k})}B^{BFGS}[B_k;s_k,y_k]
 +\big(1-\frac{\nu(z^*)}{\nu(\det{B_k})}\big)\frac{y_ky_k^\top}{s_k^\top y_k}, 
\end{align*}
then the determinant of $\bar{B}$ satisfies
\begin{align*}
 \det{\bar{B}}=\frac{\det(B^{BFGS}[B_k])}{\nu(\det{B_k})^{n-1}}\cdot \nu(z^*)^{n-1}=z^*, 
\end{align*}
in which the first equality comes from the formula 
$\det(A+vu^\top)= \det(A)(1+u^\top A^{-1}v)$ 
and the second one follows the definition of $z^*$. 
Hence there exists $B_{k+1}\in\mathrm{PD}(n)$ satisfying \eqref{eqn:update-formula-V-BFGS}. 

Next, we show that the matrix $B_{k+1}$ in \eqref{eqn:update-formula-V-BFGS} 
satisfies the optimality condition of \eqref{eqn:V-BFGS-hessian-update-prob}. 
According to G\"{u}ler, et al. \cite{guler09:_dualit_in_quasi_newton_method},
the normal vector for the affine subspace 
\[
\mathcal{M}=\{B\in\mathrm{PD}(n)~|~Bs_k=y_k\}
\]
is characterized by the form of 
\begin{align}
 \label{eqn:normal-vector-secant-cond}
 s_k\lambda^\top+\lambda s_k^\top\in\mathrm{Sym}(n),\qquad \lambda\in\Real^n. 
\end{align}
Suppose $B'\in\mathrm{PD}(n)$ be an optimal solution of 
\eqref{eqn:V-BFGS-hessian-update-prob}, then $B'$ satisfies the optimality condition that
 there exists a vector $\lambda\in\Real^n$ such that 
\begin{align*}
 &\phantom{\Longleftrightarrow}
 \nabla_B D_V(B,B_k)\big|_{B=B'}=s_k\lambda^\top + \lambda s_k^\top
 \nonumber\\
 &\Longleftrightarrow\ 
 -\nu(\det(B'))(B')^{-1}+\nu(\det(B_k))B_k^{-1} 
 =s_k\lambda^\top + \lambda s_k^\top, 
\end{align*}
where $\nabla_B D_V(B,B_k)$ denotes the gradient of $D_V(B,B_k)$ with respect to the
variable $B$. Also, the optimal solution $B'$ should satisfy the constraint $B's_k=y_k$. 
On the other hand, the matrix $B_{k+1}$ defined by
 \eqref{eqn:update-formula-V-BFGS} satisfies
\begin{align*}
 B_{k+1}^{-1} 
 &=
 \frac{\nu(\det{B_k})}{\nu(\det{B_{k+1}})}(B^{BFGS}[B_k;s_k,y_k])^{-1}
 +\bigg(1-\frac{\nu(\det{B_k})}{\nu(\det{B_{k+1}})}\bigg)
 \frac{s_ks_k^\top}{s_k^\top y_k}\\
&=
 \frac{\nu(\det{B_k})}{\nu(\det{B_{k+1}})}B^{DFP}[B_k^{-1};y_k,s_k]
 +\bigg(1-\frac{\nu(\det{B_k})}{\nu(\det{B_{k+1}})}\bigg)
 \frac{s_ks_k^\top}{s_k^\top y_k}\\
\Longleftrightarrow &
 \left\{
 \begin{array}{l}
  \displaystyle  
   -\nu(\det{B_{k+1}})B_{k+1}^{-1}+
   \nu(\det{B_k})B_k^{-1}
   =s_k\lambda^\top+\lambda s_k^\top,\vspace*{2mm}\\
  \displaystyle  
   \lambda
   =\frac{\nu(\det{B_k})}{s_k^\top y_k}B_k^{-1}y_k
   -\frac{\nu(\det{B_{k+1}})}{2s_k^\top y_k}s_k
   -\frac{\nu(\det{B_k})y_k^\top B_k^{-1}y_k}{2(s_k^\top y_k)^2}s_k. 
 \end{array}
 \right.
\end{align*}
The conditions $s_k^\top y_k>0$ and $B_k\in\mathrm{PD}(n)$ guarantees 
the existence of the above vector $\lambda$. In addition, the direct computation yields
that the constraint $B_{k+1}s_k=y_k$ is satisfied. 
Hence, $B_{k+1}$ satisfies the optimality condition. 
Since \eqref{eqn:V-BFGS-hessian-update-prob} is a strictly convex problem, 
$B_{k+1}$ is the unique optimal solution. 
\end{proof}

\bibliographystyle{plain}

\end{document}